\newcommand\numberthis{\addtocounter{equation}{1}\tag{\theequation}}
\newcommand{\mbb}[1]{\mathbb{#1}}
\newcommand{\mbf}[1]{\boldsymbol{#1}}
\newcommand{\mcal}[1]{\mathcal{#1}}
\newcommand{\mrm}[1]{\textrm{#1}}
\theoremstyle{definition}
\newtheorem{theorem}{Theorem}
\newtheorem{remark}{Remark}
\newtheorem{mydef}{Definition}
\newtheorem{lemma}{Lemma}
\newtheorem{exmp}{Example}
\newtheorem{prop}[theorem]{Proposition}
\newtheorem{corollary}{Corollary}
\newcommand\incircbin
\newcommand\@incircbin[2]
\newcommand{\owedge}{\incircbin{\wedge}}
\begin{document}

\def\spacingset#1{\renewcommand{\baselinestretch}%
{#1}\small\normalsize} \spacingset{1}


  \title{\bf A Kernel-Based Neural Network for High-dimensional Genetic Risk Prediction Analysis}
  \author{Xiaoxi Shen, 
    Xiaoran Tong
    and \\
    Qing Lu}
	\date{}
  \maketitle

\bigskip
\begin{abstract}
Risk prediction capitalizing on emerging human genome findings holds great promise for new prediction and prevention strategies. While the large amounts of genetic data generated from high-throughput technologies offer us a unique opportunity to study a deep catalog of genetic variants for risk prediction, the high-dimensionality of genetic data and complex relationships between genetic variants and disease outcomes bring tremendous challenges to risk prediction analysis. To address these rising challenges, we propose a kernel-based neural network (KNN) method. KNN inherits features from both linear mixed models (LMM) and classical neural networks and is designed for high-dimensional risk prediction analysis. To deal with datasets with millions of variants, KNN summarizes genetic data into kernel matrices and use the kernel matrices as inputs. Based on the kernel matrices, KNN builds a single-layer feedforward neural network, which makes it feasible to consider complex relationships between genetic variants and disease outcomes. The parameter estimation in KNN is based on MINQUE and we show, that under certain conditions, the average prediction error of KNN can be smaller than that of LMM. Simulation studies also confirm the results.
\end{abstract}

\noindent%
{\it Keywords:}  Human genome, Complex relationships, MINQUE
\vfill

\newpage
\section{Introduction}
\label{sec:intro}

Linear mixed effect models are powerful tools to model complex data structures. By adding random effects into the model, it becomes feasible to model correlated observations. Moreover, it is also possible in linear mixed effect models to make best predictions on the random effects. In genetic studies, more advantages of linear mixed effect models have been explored. For instance, in genome-wide association studies (GWAS) \citep{bush2012genome}, a simple linear regression is conducted on each single-nucleotide polymorphism (SNP) so that there are a large number of hypothesis to be tested and we also need to deal with the multiple test correction issue as well. On the other hand, if we think the genetic effect as a random effect, the null hypothesis we are going to test reduces to testing whether the variance component of the random effect is zero or not. Here are some well-known applications: in sequence kernel association test (SKAT) proposed by \citet{wu2011rare}, a score type test based on a mixed effect model is used to perform the genetic effect; \citet{yang2011gcta} created the tool known as the genome-wide complex trait analysis (GCTA), which is also based on linear mixed model, to address the ``missing heretability" problem.

In this paper, we propose a method, which we call it the kernel neural network (KNN) for high-dimensional risk prediction analysis. As will be seen in the paper, under certain scenarios, our model can reduce to a linear mixed effect model. We call such method KNN in that it also inherits an important property from the neural network, which is that the model can be used to consider nonlinear effects. Due to the complex structure of such method, it is difficult to obtain estimators for the parameters in the model. To make things worse, not all the parameters are identifiable. To address such issue, instead of using the popular likelihood type inference using the restricted maximum likelihood estimator (REML) \citep{corbeil1976restricted}, we use the minimum quadratic unbiased estimator (MINQUE) proposed by \citet{rao1970estimation, rao1971estimation, rao1972estimation} to estimate the ``variance components". We show both theoretically and empirically that the model has some interesting properties.

The remaining paper is arranged as follows: Section 2 provides the basic description of the KNN and the estimation procedure for the parameters; Section 3 provides some discussion on how to make predictions using KNN and followed by some simulation results in Section 4. Before we proceeding to the main text, we first summarize some notations that will be frequently used in this paper.

\textit{Notations}: Throughout the paper, capital bold italic letters $\mbf{A},\mbf{B},\ldots, \mbf{\Gamma},\mbf{\Theta},\ldots$ will be used to denote matrices; small bold italic letters $\mbf{a},\mbf{b},\ldots,\mbf{\alpha},\mbf{\beta},\ldots$ will be used to denote vectors and other small letters will be used to denote scalars. $\mbf{I}_n$ will be used to denote an $n\times n$ identity matrix and the symbol ``$\precsim$" will be used to denote asymptotically less than.

\section{Methodologies}
\label{sec:meth}
Kernel methods are widely used in recent machine learning due to its capability of capturing nonlinear features from the data so that the prediction error can be diminished. As has been mentioned in \citet{shawe2004kernel}, given a kernel and a training set, the kernel matrix acts as an information bottleneck, as all the information available to a kernel algorithm must be extracted from this matrix. On the other hand, linear mixed effect models are also widely used in the area of genetic risk prediction \citep{yang2011gcta}. Therefore, it seems natural to combine these two methods together. A naive way is two change the covariance matrix of the random effect in the linear mixed model to a kernel matrix. For instance, in \citet{yang2011gcta}, they consider the following linear mixed model:
\begin{equation}\label{Eq: GCTAModel}
	\mbf{y}=\mbf{Z\beta}+\mbf{a}+\mbf{\epsilon},
\end{equation}
where $\mbf{y}\in\mbb{R}^n$ is a vector of phenotypes; $\mbf{Z}$ is the design matrix for fixed effects $\mbf{\beta}$; $\mbf{a}\in\mbb{R}^n$ is the total genetic effects of the individuals with $\mbf{a}\sim\mcal{N}_n\left(\mbf{0},\sigma_a^2\mbf{K}\right)$, and $\mbf{K}$ can be interpreted as the genetic relationship matrix between subjects; $\mbf{\epsilon}\sim\mcal{N}_n(\mbf{0},\sigma_\epsilon^2\mbf{I}_n)$. Such model can also be written into the following hierarchical structure:
\begin{align*}
	\mbf{y}|\mbf{Z},\mbf{a} & \sim\mcal{N}_n\left(\mbf{Z\beta}+\mbf{a},\sigma_\epsilon^2\mbf{I}_n\right)\\
	\mbf{a} & \sim\mcal{N}_n\left(\mbf{0},\sigma_a^2\mbf{K}\right).
\end{align*}
To model more complex genotype-to-phenotype relationships, what we did is a little step further, we start by creating some latent features from the kernel matrix. Based on these latent features, higher order kernel matrices can be constructed, which will be used as the covariance matrix for the random effect. 

We now explain the model in more details. Consider that the phenotype $\mbf{y}$ is modeled as a random effect model: given some latent variables $\mbf{u}_1,\ldots,\mbf{u}_m$,
\begin{align*}
	\mbf{y}|\mbf{Z},\mbf{a} & \sim\mcal{N}_n\left(\mbf{Z\beta}+\mbf{a},\phi\mbf{I}_n\right)\\
	\mbf{a}|\mbf{u}_1,\ldots,\mbf{u}_m & \sim\mcal{N}_n\left(\mbf{0},\sum_{j=1}^J\tau_j\mbf{K}_j(\mbf{U})\right),
\end{align*}
that is the covariance matrix of the random effect $\mbf{a}$ is a positive combination of some latent kernel matrices $\mbf{K}_j(\mbf{U})$ constructed based on latent variables $\mbf{u}_1,\ldots,\mbf{u}_m$ and we let $\mbf{U}=[\mbf{u}_1,\cdots\mbf{u}_m]\in\mbb{R}^{n\times m}$. Moreover, the latent variable $\mbf{u}_i$ is modeled as follow:
$$
\mbf{u}_1,\ldots,\mbf{u}_m\sim\mrm{ i.i.d.  }\mcal{N}_n\left(\mbf{0},\sum_{l=1}^L\xi_l\mbf{K}_l(\mbf{X})\right),
$$
where in the model, $n$ is the sample size; $m$ is the number of hidden units in the network; $\mbf{K}_l(\mbf{X}), l=1,\ldots,L$ are kernel matrices constructed based on the genetic variables. For instance, if we have $p$ genetic variants ($p$ can be greater than $n$), we can define $\mbf{K}(\mbf{X})=p^{-1}\mbf{XX}^T$, which is known as the product kernel.

As an illustration, the basic hierarchical structure of the model can be seen from Figure \ref{KNNfig}. Due to the similarity in the network structure as in the case of popular neural network, we thus call our model a kernel neural network (KNN). KNN has several nice features. First, by considering genetic effects as random effects, the method can simultaneously deal with millions of variants. This addresses the limitation of the fixed-effect conventional neural network, which is computationally prohibitive on such a large number of variables. Second, by using random genetic effects, the model complexity is also greatly reduced, as we no longer require to estimate a large number of fixed genetic effects. Third, KNN allows for a large number of hidden units without increasing model complexity. Finally, using hidden units, KNN can capture non-linear and non-additive effects, and therefore is able to model complex functions beyond linear model.

\begin{figure}[htbp]
	\centering
	\includegraphics[width=\textwidth]{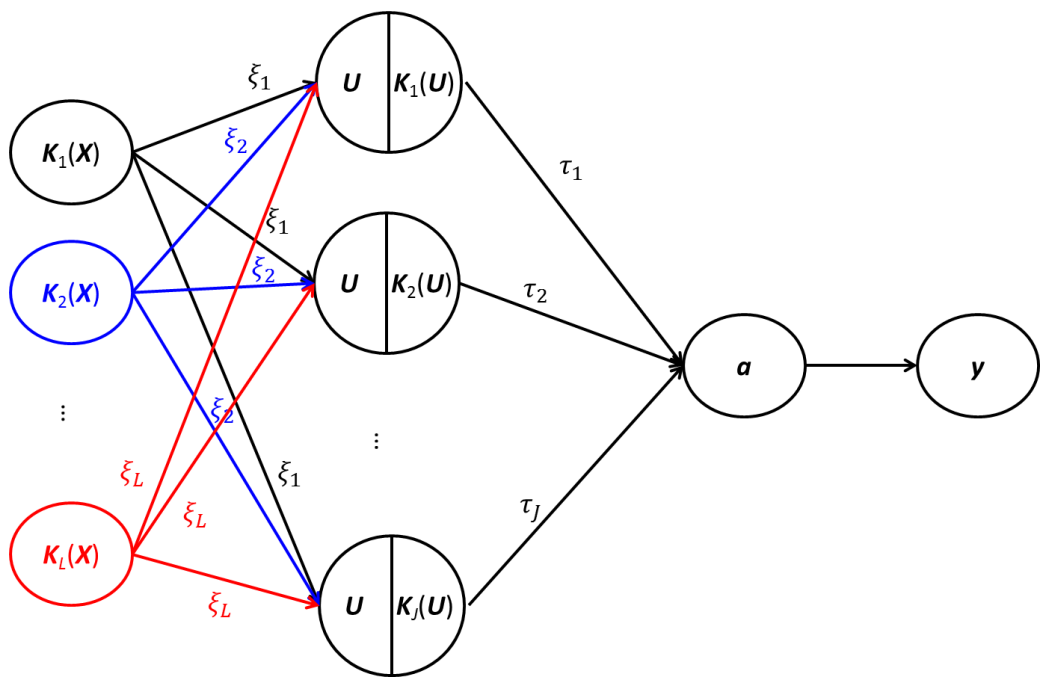}
	\caption{An illustration of the hierarchical structure of the kernel neural network model.}\label{KNNfig}
\end{figure}

In the remaining part of this section and section 3, we will focus on the scenario where there is no fixed effects, that is $\mbf{\beta}=\mbf{0}$. In section 4, we will consider the general estimation procedure when $\mbf{\beta}\neq\mbf{0}$ and we will also calculate the prediction error. 

\subsection{Quadratic Estimators for Variance Components}
Popular estimation strategies for variance components in linear models are the maximum likelihood estimator (MLE) and the restricted maximum likelihood estimator (REML) \citep{corbeil1976restricted}. However, both methods depend on the marginal distribution of $\mbf{y}$. In our kernel neural network (KNN) model, it is generally difficult to obtain the marginal distribution of $\mbf{y}$, which involves high dimensional integration with respect to $\mbf{u}_1,\ldots,\mbf{u}_m$. Moreover, the $\mbf{u}_i$'s are embedded in the kernel matrix $\mbf{K}(\mbf{U})$, which makes the integration even more complicated.

On the other hand, given the model described in the previous paragraph, we can easily know the conditional distribution of $\mbf{y}|\mbf{u}_1,\ldots,\mbf{u}_m$:
$$
\mbf{y}|\mbf{u}_1,\ldots,\mbf{u}_m\sim\mcal{N}_n\left(\mbf{0},\sum_{j=1}^J\tau_j\mbf{K}_j(\mbf{U})+\phi\mbf{I}_n\right).
$$
Then the marginal mean and variance of $\mbf{y}$ can be obtained via conditioning arguments:
\begin{align*}
	\mbb{E}[\mbf{y}] & =\mbb{E}\left(\mbb{E}[\mbf{y}|\mbf{u}_1,\ldots,\mbf{u}_m]\right)=\mbf{0}.\\
	\mrm{Var}[\mbf{y}] & =\mbb{E}\left[\mrm{Var}\left(\mbf{y}|\mbf{u}_1,\ldots,\mbf{u}_m\right)\right]+\mrm{Var}\left[\mbb{E}\left(\mbf{y}|\mbf{u}_1,\ldots,\mbf{u}_m\right)\right]\\
	& =\mbb{E}\left[\sum_{j=1}^J\tau_j\mbf{K}_j(\mbf{U})+\phi\mbf{I}_n\right]\\
	& =\sum_{j=1}^J\tau_j\mbb{E}[\mbf{K}_j(\mbf{U})]+\phi\mbf{I}_n.\\
	& :=\sum_{j=0}^J\tau_j\mbb{E}\left[\mbf{K}_j(\mbf{U})\right],
\end{align*}
where $\tau_0=\phi$ and $\mbf{K}_0(\mbf{U})=\mbf{I}_n$. Given the marginal mean and covariance matrix, the MInimum Quadratic Unbiased Estimator (MINQUE) proposed by \citet{rao1970estimation, rao1971estimation, rao1972estimation} can be used to estimate the variance components. The basic idea of MINQUE is to use a quadratic form $\mbf{y}^T\mbf{\Theta}\mbf{y}$ to estimate a linear combination of variance components. The MINQUE matrix $\mbf{\Theta}$ is obtained by minimizing a suitable matrix norm, which is typically chosen to be the Frobenius norm, of the difference between $\mbf{\Theta}$ and the matrix in the quadratic estimator by assuming that we know the random components in the linear models. The constraint in the optimization problem is to guarantee the unbiasedness of the estimators. One advantage of MINQUE is that it has a closed form solution provided by Lemma 3.4 in \citet{rao1971estimation} so that it can be computed efficiently.
However, MINQUE can also provide a negative estimate for a single variance component. When this occurs, we simply set the negative estimators to be zero, as we usually did for MLE and REML of variance components, except for the error variance component. If the MINQUE estimate for error variance component becomes negative, we project the MINQUE matrix $\mbf{\Theta}$ onto the positive semi-definite cone $\mcal{S}_n^+$. Specifically, the modified estimate for error component becomes
$$
\hat{\tau}_0=\mbf{y}^T\mbf{O}\begin{bmatrix}
	\max\{\lambda_1,0\} & & \\
	& \ddots & \\
	& & \max\{\lambda_n,0\}
\end{bmatrix}\mbf{O}^T\mbf{y},
$$
where $\mbf{O}\mrm{diag}\{\lambda_1,\ldots,\lambda_n\}\mbf{O}^T$ is the eigen-decomposition of $\mbf{\Theta}$.

\subsection{MINQUE in KNN}
For the ease of theoretical justifications, throughout the remaining of the paper, we illustrate the method with one kernel matrix with the form of $\mbf{K}_{ij}(\mbf{U})=f\left[\frac{1}{m}\mbf{w}_i^T\mbf{w}_j\right]$, where $\mbf{w}_1,\ldots,\mbf{w}_n$ are the rows of the matrix $\mbf{U}$.

We start by considering the simplest case $f(x)=x$, in which case, the kernel matrix becomes $\mbf{K}(\mbf{U})=\frac{1}{m}\mbf{U}\mbf{U}^T$. In this case, we have an explicit form of the marginal variance of $\mbf{y}$. Since $\mbf{U}\mbf{U}^T\sim\mcal{W}_n(m, \sum_{l=1}^L\xi_l\mbf{K}_l(\mbf{X}))$, we have
\begin{equation}\label{marginalVaryUnid}
	\mbf{V}:=\mrm{Var}[\mbf{y}]=\tau\mbb{E}[\mbf{K}(\mbf{U})]+\phi\mbf{I}_n=\sum_{l=1}^L\tau\xi_l\mbf{K}_l(\mbf{X})+\phi\mbf{I}_n.
\end{equation}
From equation (\ref{marginalVaryUnid}), we can see that there is an identifiability issue if we directly estimate $\tau$ and $\xi_l$. To resolve this issue, we reparameterize the covariance matrix by letting $\theta_l=\tau\xi_l$, $\theta_0=\phi$, $\mbf{K}_0(\mbf{X})=\mbf{I}_n$ and rewrite $\mrm{Var}[\mbf{y}]$ as
\begin{equation}\label{marginalVaryId}
	\mbf{V}=\sum_{l=0}^L\theta_l\mbf{K}_l(\mbf{X})=\sum_{l=0}^L\theta_l\mbf{S}_l(\mbf{X})\mbf{S}_l^T(\mbf{X}),
\end{equation}
where $\mbf{S}_0(\mbf{X}),\ldots,\mbf{S}_L(\mbf{X})$ are the Cholesky lower triangles for the kernel matrices $\mbf{K}_0(\mbf{X}),\ldots,\mbf{K}_L(\mbf{X})$, respectively. Then the parameters $\theta_0,\theta_1,\ldots,\theta_L$ can be estimated via MINQUE. Specifically
\begin{align*}
	\hat{\theta}_0 & =\mbf{y}^T\hat{\mbf{A}}_0\mbf{y}\\
	\hat{\theta}_i & =\mbf{y}^T\mbf{A}_i\mbf{y}\vee0,\quad i = 1,\ldots,L,
\end{align*}
where
$$
\mbf{A}_i=\sum_{l=0}^{L}\eta_l\left[\sum_{l=0}^L\mbf{K}_l(\mbf{X})\right]^{-1}\mbf{K}_l(\mbf{X})\left[\sum_{l=0}^L\mbf{K}_l(\mbf{X})\right]^{-1},\quad i=0,\ldots,L
$$
and $\eta_0,\ldots,\eta_L$ are the solutions to $\mbf{\Gamma\eta}=\mbf{e}_i$, where $\mbf{e}_i$ is a vector of zero except that the $i$th element is 1 and 
$$
\mbf{\Gamma}_{ij}=\mrm{tr}\left(\left[\sum_{l=0}^L\mbf{K}_l(\mbf{X})\right]^{-1}\mbf{K}_i(\mbf{X})\left[\sum_{l=0}^L\mbf{K}_l(\mbf{X})\right]^{-1}\mbf{K}_j(\mbf{X})\right).
$$
Moreover, $\hat{\mbf{A}}_0=P_{\mcal{S}_n^+}\mbf{A}_0$ as mentioned above. For general kernel matrix of the form $\mbf{K}(\mbf{U})=f\left[\frac{1}{m}\mbf{U}\mbf{U}^T\right]$, where $f[\mbf{B}]$ means that we apply the map $f:\mbb{R}\to\mbb{R}$ elementwisely to the matrix $\mbf{B}$ and $f$ is a function on $\mbb{R}$ satisfying the following property, which we called the Generalized Linear Separable Condition: 
\begin{equation}
	f\left(\sum_{\alpha=1}^kc_\alpha x_\alpha\right)=\sum_{\alpha=1}^{k'}g_\alpha(c_1,\ldots,c_k)h_\alpha(x_1,\ldots,x_k),
\end{equation}
where $c_1,\ldots,c_k\in\mbb{R}$ are coefficients and $g_1,\ldots,g_k, h_1,\ldots,h_k$ are some functions. Examples of kernel functions satisfying the condition are polynomial kernels. We know that
\begin{equation}\label{KernelOfFuncProd}
	\mbf{K}_{st}(\mbf{U})=f\left(\frac{\mbf{w}_s^T\mbf{w}_t}{m}\right),\quad s,t=1,\ldots,n
\end{equation}
and 
\begin{equation}\label{distOfWiWj}
	\begin{bmatrix}
		\mbf{w}_{i1}\\
		\mbf{w}_{j1}
	\end{bmatrix},\ldots,\begin{bmatrix}
		\mbf{w}_{im}\\
		\mbf{w}_{jm}
	\end{bmatrix}\sim\mrm{i.i.d. }\mcal{N}_2\left(\begin{bmatrix}
		0\\
		0
	\end{bmatrix},\begin{bmatrix}
		\sigma_{ii} & \sigma_{ij}\\
		\sigma_{ij} & \sigma_{jj}
	\end{bmatrix}
	\right),
\end{equation}
where $\sigma_{ii},\sigma_{jj}, \sigma_{ij}$ are the corresponding elements in $\mbf{\Sigma}=\sum_{l=1}^L\xi_l\mbf{K}_l(\mbf{X})$. We can then use Taylor expansion to obtain an approximation of $\mbb{E}[\mbf{K}_{ij}(\mbf{U})]$, which is given in Lemma \ref{funcProdApproxLm}. The proof of Lemma \ref{funcProdApproxLm} can be found in Appendix B.

\begin{lemma}\label{funcProdApproxLm}
	Let the random vectors $\mbf{w}_i, \mbf{w}_j\in\mbb{R}^m$ be as in (\ref{distOfWiWj}) and the $\mbf{K}_{ij}(\mbf{U})$ as defined in (\ref{KernelOfFuncProd}). Then if 
	\begin{equation}\label{asBoundedAssumption}
		\left|f''\left(\lambda\sigma_{ij}+(1-\lambda)\frac{\mbf{w}_i^T\mbf{w}_j}{m}\right)\right|\leq M,\quad\mrm{a.s.},
	\end{equation}
	for some $M>0$ and all $\lambda\in[0,1]$, we have
	\begin{align*}
		\mbb{P}\left(\left|\mbf{K}_{ij}(\mbf{U})-\hat{\mbf{K}}_{ij}(\mbf{U})\right|>\delta\right)
		& \leq4\exp\left\{-m\left(1\wedge\frac{\delta}{20Ms_{ij}}\wedge\frac{\delta}{M\sigma_{ij}^2}\wedge\frac{
			1}{4|\sigma_{ij}|}\sqrt{\frac{2\delta}{M}}\right)\right\},
	\end{align*}
	where $\hat{\mbf{K}}_{ij}(\mbf{U})=f(\sigma_{ij})+f'(\sigma_{ij})\left(\frac{\mbf{w}_i^T\mbf{w}_j}{m}-\sigma_{ij}\right).$
\end{lemma}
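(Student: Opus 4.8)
The plan is to reduce everything to a concentration inequality for the empirical inner product $m^{-1}\mbf{w}_i^{T}\mbf{w}_j$ about its mean $\sigma_{ij}$. Note that $\hat{\mbf{K}}_{ij}(\mbf{U})$ is exactly the first-order Taylor polynomial of $f$ at $\sigma_{ij}$, evaluated at $m^{-1}\mbf{w}_i^{T}\mbf{w}_j$. Hence, assuming $f\in C^{2}$, Taylor's theorem with the Lagrange remainder furnishes a $(\mbf{w}_i,\mbf{w}_j)$-measurable $\lambda\in[0,1]$ with
\[
\mbf{K}_{ij}(\mbf{U})-\hat{\mbf{K}}_{ij}(\mbf{U})=\tfrac12\,f''\!\Big(\lambda\sigma_{ij}+(1-\lambda)\tfrac{\mbf{w}_i^{T}\mbf{w}_j}{m}\Big)\Big(\tfrac{\mbf{w}_i^{T}\mbf{w}_j}{m}-\sigma_{ij}\Big)^{2},
\]
and the hypothesis (\ref{asBoundedAssumption}) then gives $\big|\mbf{K}_{ij}(\mbf{U})-\hat{\mbf{K}}_{ij}(\mbf{U})\big|\le\tfrac{M}{2}\big(m^{-1}\mbf{w}_i^{T}\mbf{w}_j-\sigma_{ij}\big)^{2}$ almost surely, whence
\[
\mathbb{P}\!\left(\big|\mbf{K}_{ij}(\mbf{U})-\hat{\mbf{K}}_{ij}(\mbf{U})\big|>\delta\right)\le\mathbb{P}\!\left(\Big|\tfrac{\mbf{w}_i^{T}\mbf{w}_j}{m}-\sigma_{ij}\Big|>t\right),\qquad t:=\sqrt{\tfrac{2\delta}{M}}.
\]
So the problem is to prove a sub-exponential tail bound for $D:=m^{-1}\mbf{w}_i^{T}\mbf{w}_j-\sigma_{ij}=m^{-1}\sum_{k=1}^{m}(w_{ik}w_{jk}-\sigma_{ij})$, an average of $m$ i.i.d.\ centred products of jointly Gaussian coordinates.

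For this step I would condition on $\mbf{w}_i$ and use the regression representation of the bivariate normal in (\ref{distOfWiWj}): with $\sigma_{jj|i}:=\sigma_{jj}-\sigma_{ij}^{2}/\sigma_{ii}$, conditionally on $\mbf{w}_i$ the coordinates $w_{jk}$ are independent and $w_{jk}\mid\mbf{w}_i\sim\mcal{N}(\tfrac{\sigma_{ij}}{\sigma_{ii}}w_{ik},\sigma_{jj|i})$. This decomposes the deviation as $D=A+B$, where $A:=\tfrac{\sigma_{ij}}{\sigma_{ii}}(\tfrac1m\|\mbf{w}_i\|^{2}-\sigma_{ii})=\tfrac{\sigma_{ij}}{m}(\chi^{2}_{m}-m)$ is a centred, scaled $\chi^{2}_{m}$ depending only on $\mbf{w}_i$, and $B\mid\mbf{w}_i\sim\mcal{N}(0,\tfrac{\sigma_{jj|i}}{m}\cdot\tfrac1m\|\mbf{w}_i\|^{2})$. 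Since $\{|D|>t\}\subseteq\{|A|>t/2\}\cup\{|B|>t/2\}$, I would bound the two pieces separately. For $A$, the standard two-sided $\chi^{2}_{m}$ tail bound gives an estimate of the form $2\exp\{-c\,m(t^{2}/\sigma_{ij}^{2}\wedge t/|\sigma_{ij}|)\}$; after substituting $t^{2}=2\delta/M$ this accounts for the $\delta/(M\sigma_{ij}^{2})$ and $(4|\sigma_{ij}|)^{-1}\sqrt{2\delta/M}$ terms. For $B$, splitting on the event $\{m^{-1}\|\mbf{w}_i\|^{2}\le 5\sigma_{ii}\}$: on this event the conditional Gaussian tail is at most $2\exp\{-mt^{2}/(40\,\sigma_{ii}\sigma_{jj|i})\}$, which (since $\sigma_{ii}\sigma_{jj|i}=\sigma_{ii}\sigma_{jj}-\sigma_{ij}^{2}$) yields the $\delta/(20Ms_{ij})$ term, while the complementary event $\{\chi^{2}_{m}>5m\}$ has probability at most $e^{-m}$ by the same $\chi^{2}$ bound, producing the leading $1$ in the minimum. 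Adding the (at most four) resulting exponential terms and using $\sum_{k}2e^{-ma_{k}}\le 4e^{-m\min_{k}a_{k}}$ gives the claimed inequality.

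The conceptually routine but technically delicate part is this last step: choosing the truncation level (the $5\sigma_{ii}$ above was picked so that $\{\chi^{2}_{m}>5m\}$ has probability $\le e^{-m}$), fixing the explicit constants in the $\chi^{2}$ tail bounds — or, if sharper constants are needed, computing directly the moment generating function $\mathbb{E}[e^{\theta\,\mbf{x}^{T}\mbf{M}\mbf{x}}]=(1-2\theta\sigma_{ij}-\theta^{2}(\sigma_{ii}\sigma_{jj}-\sigma_{ij}^{2}))^{-1/2}$ of the quadratic form $w_{ik}w_{jk}=\mbf{x}^{T}\mbf{M}\mbf{x}$ (with $\mbf{M}$ the $2\times2$ matrix having zero diagonal and off-diagonal entries $1/2$) and optimising the Chernoff exponent — and then verifying through a short case analysis (comparing $t$, $|\sigma_{ij}|$ and $\sqrt{\sigma_{ii}\sigma_{jj}}$) that each of the four exponents dominates the claimed four-term minimum, which is what forces the constants $20$, $\tfrac14$, and the prefactor $4$. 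I would handle the degenerate cases separately: if $\sigma_{ij}=0$ then $A\equiv0$ and the two $\sigma_{ij}$-dependent terms in the minimum are $+\infty$, so only the $B$-estimate is needed; if $\sigma_{ii}\sigma_{jj}=\sigma_{ij}^{2}$ then $B\equiv0$; and if $\sigma_{ii}=0$ the claim is trivial. I expect no genuine difficulty beyond this constant- and case-bookkeeping.
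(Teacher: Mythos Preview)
Your proposal is correct and follows essentially the same route as the paper: the same Taylor-remainder reduction, the same conditioning on $\mbf{w}_i$ to split $D$ into a scaled centred $\chi^2_m$ piece (your $A$, the paper's term (II)) and a conditionally Gaussian piece (your $B$, the paper's term (I)), and the same truncation at $\sigma_{ii}^{-1}\|\mbf{w}_i\|^{2}\le 5m$ obtained by choosing $\alpha=e^{-m}$ in a $\chi^2$ quantile bound. The constants you anticipate ($20$, $\tfrac14$, prefactor $4$) are exactly those the paper arrives at, so the remaining bookkeeping you flag is indeed routine.
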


Lemma \ref{funcProdApproxLm} and Remark \ref{funcProdApproxRmk} in Appendix B show that when we use $\hat{K}(\mbf{U})=[\hat{\mbf{K}}_{ij}(\mbf{U})]$ to approximate $\mbf{K}(\mbf{U})$ and the number of hidden features $\mbf{u}_1,\ldots,\mbf{u}_m$ is large enough, the approximation will be sufficiently small. Hence, we can write $\mbf{K}(\mbf{U})$ as follow:

\begin{equation}\label{KmatApprox}
	\mbf{K}(\mbf{U})=\hat{\mbf{K}}(\mbf{U})+o_P(1)=f[\mbf{\Sigma}]+f'[\Sigma]\odot\left(\frac{1}{m}\mbf{U}\mbf{U}^T-\mbf{\Sigma}\right)+o_P(1),
\end{equation}
where $\odot$ means the Hadamard product of two matrices. Moreover, the Strong Law of Large Numbers implies $\frac{1}{m}\mbf{U}\mbf{U}^T\to\mbf{\Sigma}$ a.s. Therefore, equation (\ref{KmatApprox}) can be further written as
$$
\mbf{K}(\mbf{U})=f[\mbf{\Sigma}]+o_P(1),
$$
i.e., $\mbf{K}(\mbf{U})\xrightarrow{P}f[\mbf{\Sigma}]$ as $m\to\infty$ element-wisely. Following from the version of Dominated Convergence Theorem based on convergence in probability, the following lemma can be easily shown.

\begin{lemma}\label{EFuncProdApproxLm}
	Under the assumptions of Lemma \ref{funcProdApproxLm}, if $f''\left(\eta_{ij}\right)\left(\frac{\mbf{w}_i^T\mbf{w}_j}{m}-\sigma_{ij}\right)^2\in L^1(\mbb{P})$, then
	$$
	\mbb{E}\left[\frac{1}{2}f''\left(\eta_{ij}\right)\left(\frac{\mbf{w}_i^T\mbf{w}_j}{m}-\sigma_{ij}\right)^2\right]=o(1),
	$$
	where $\eta_{ij}=\lambda\sigma_{ij}+(1-\lambda)\frac{\mbf{w}_i^T\mbf{w}_j}{m}$ for some $\lambda\in[0,1]$.
\end{lemma}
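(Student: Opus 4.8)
The plan is to dominate the Taylor remainder pointwise using the uniform bound on $f''$ supplied by Lemma~\ref{funcProdApproxLm}, and then to control the expectation of the dominating quantity by a one-line second-moment computation; the ``dominated convergence in probability'' route alluded to just before the statement gives the same conclusion and is discussed at the end. First, note that the Lagrange point $\eta_{ij}=\lambda\sigma_{ij}+(1-\lambda)\tfrac{1}{m}\mbf{w}_i^T\mbf{w}_j$ lies on the segment joining $\sigma_{ij}$ and $\tfrac{1}{m}\mbf{w}_i^T\mbf{w}_j$ for \emph{every} $\lambda\in[0,1]$, so assumption (\ref{asBoundedAssumption}) of Lemma~\ref{funcProdApproxLm} yields $|f''(\eta_{ij})|\le M$ almost surely, uniformly in $\lambda$ --- in particular for whatever (measurable selection of the) $\lambda$ Taylor's theorem produces, which sidesteps any measurability concern. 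Consequently
\begin{equation*}
	\Bigl|\tfrac{1}{2}f''(\eta_{ij})\bigl(\tfrac{1}{m}\mbf{w}_i^T\mbf{w}_j-\sigma_{ij}\bigr)^2\Bigr|\;\le\;\tfrac{M}{2}\bigl(\tfrac{1}{m}\mbf{w}_i^T\mbf{w}_j-\sigma_{ij}\bigr)^2\qquad\text{a.s.,}
\end{equation*}
and it remains only to show $\mbb{E}\bigl[(\tfrac{1}{m}\mbf{w}_i^T\mbf{w}_j-\sigma_{ij})^2\bigr]=o(1)$.

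To that end, write $\tfrac{1}{m}\mbf{w}_i^T\mbf{w}_j-\sigma_{ij}=\tfrac{1}{m}\sum_{k=1}^m Z_k$ with $Z_k:=\mbf{w}_{ik}\mbf{w}_{jk}-\sigma_{ij}$. By (\ref{distOfWiWj}) the coordinate pairs $(\mbf{w}_{ik},\mbf{w}_{jk})$, $k=1,\ldots,m$, are i.i.d.\ centered bivariate normal, so the $Z_k$ are i.i.d.\ with mean zero, and the fourth-moment (Isserlis) identity for a centered bivariate normal gives $\mbb{E}[\mbf{w}_{i1}^2\mbf{w}_{j1}^2]=\sigma_{ii}\sigma_{jj}+2\sigma_{ij}^2$, hence $\mbb{E}[Z_1^2]=\sigma_{ii}\sigma_{jj}+\sigma_{ij}^2<\infty$. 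Therefore
\begin{equation*}
	\mbb{E}\Bigl[\bigl(\tfrac{1}{m}\mbf{w}_i^T\mbf{w}_j-\sigma_{ij}\bigr)^2\Bigr]=\frac{1}{m^2}\sum_{k=1}^m\mbb{E}[Z_k^2]=\frac{\sigma_{ii}\sigma_{jj}+\sigma_{ij}^2}{m},
\end{equation*}
and combining with the pointwise bound gives $\bigl|\mbb{E}[\tfrac{1}{2}f''(\eta_{ij})(\tfrac{1}{m}\mbf{w}_i^T\mbf{w}_j-\sigma_{ij})^2]\bigr|\le M(\sigma_{ii}\sigma_{jj}+\sigma_{ij}^2)/(2m)\to0$, which is the claim. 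Since this quadratic form is precisely the Lagrange remainder in $\mbf{K}_{ij}(\mbf{U})=\hat{\mbf{K}}_{ij}(\mbf{U})+\tfrac{1}{2}f''(\eta_{ij})(\tfrac{1}{m}\mbf{w}_i^T\mbf{w}_j-\sigma_{ij})^2$, and $\mbb{E}[\tfrac{1}{m}\mbf{w}_i^T\mbf{w}_j]=\sigma_{ij}$, the lemma yields the intended consequence $\mbb{E}[\mbf{K}_{ij}(\mbf{U})]=f(\sigma_{ij})+o(1)$. The same computation shows the hypothesis $f''(\eta_{ij})(\tfrac{1}{m}\mbf{w}_i^T\mbf{w}_j-\sigma_{ij})^2\in L^1(\mbb{P})$ is automatic once (\ref{asBoundedAssumption}) holds, the dominating variable being integrable.

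An equivalent argument, closer to the remark preceding the statement, runs as follows: the strong law of large numbers gives $\tfrac{1}{m}\mbf{w}_i^T\mbf{w}_j\to\sigma_{ij}$ a.s., so the integrand tends to $0$ a.s.; it is dominated by $G_m:=\tfrac{M}{2}(\tfrac{1}{m}\mbf{w}_i^T\mbf{w}_j-\sigma_{ij})^2$, and the moment computation above shows $\mbb{E}[G_m]\to0=\mbb{E}[\lim_m G_m]$, so the generalized (Pratt) dominated convergence theorem applies. In either route the entire content is the $O(1/m)$ bound on the mean-square fluctuation of the empirical inner product $\tfrac{1}{m}\mbf{w}_i^T\mbf{w}_j$ about its limit, so I do not anticipate a genuine obstacle; the one subtlety worth flagging is that the natural dominating function $G_m$ is itself $m$-dependent, so the classical dominated convergence theorem cannot be quoted verbatim --- one must either carry out the explicit second-moment computation (as above) or pass to its generalized form (or, equivalently, replace $G_m$ by the fixed integrable bound $\tfrac{M}{2}\sup_m(\tfrac{1}{m}\mbf{w}_i^T\mbf{w}_j-\sigma_{ij})^2$, whose integrability follows from a reverse-martingale maximal inequality).
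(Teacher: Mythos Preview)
Your proof is correct. The paper does not actually spell out a proof of this lemma; it merely states in the text that it ``can be easily shown'' from ``the version of Dominated Convergence Theorem based on convergence in probability.'' Your alternative route (SLLN for $\tfrac{1}{m}\mbf{w}_i^T\mbf{w}_j$ plus a dominated-convergence argument) is exactly that sketch, and you correctly flag the one point the paper glosses over: the natural dominating envelope $G_m$ is $m$-dependent, so one needs Pratt's lemma or an explicit bound rather than the classical DCT verbatim. Your primary route --- bounding $|f''(\eta_{ij})|\le M$ via (\ref{asBoundedAssumption}) and then computing the second moment of $\tfrac{1}{m}\mbf{w}_i^T\mbf{w}_j-\sigma_{ij}$ directly via Isserlis --- is more elementary, avoids any limit-theorem machinery, and delivers the explicit rate $O(1/m)$ rather than just $o(1)$; it also makes transparent that the $L^1$ hypothesis in the statement is redundant once (\ref{asBoundedAssumption}) is in force.
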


Based on Lemma \ref{EFuncProdApproxLm}, the marginal variance-covariance matrix $\mbf{V}$ can be written as
\begin{align*}
	\mbf{V} & =\tau\mbb{E}\left[\mbf{K}(\mbf{U})\right]+\phi\mbf{I}_n\\
	& \simeq\tau\mbb{E}\left[\hat{\mbf{K}}(\mbf{U})\right]+\phi\mbf{I}_n\\
	& \simeq\tau f[\mbf{\Sigma}]+\phi\mbf{I}_n\\
	& =\tau\sum_{l=1}^{L'}g_l(\xi_1,\ldots,\xi_L)h_l\left[\mbf{K}_1(\mbf{X}),\ldots,\mbf{K}_L(\mbf{X})\right]+\phi\mbf{I}_n\\
	& =\sum_{l=0}^{L'}\theta_l\mbf{S}_l(\mbf{X})\mbf{S}_l^T(\mbf{X}),
\end{align*}
where $\theta_0=\phi$, $\theta_l=\tau g_l(\xi_1,\ldots,\xi_L)$, $l=1,\ldots,L'$ and $S_0(\mbf{X})=\mbf{I}_n$, $\mbf{S}_l(\mbf{X})$ is the Cholesky lower triangle for the matrix $h_l\left[\mbf{K}_1(\mbf{X}),\ldots,\mbf{K}_L(\mbf{X})\right]$, $l=1,\ldots,L$. As an example, we may consider $f(x)=(1+x)^2$, which corresponds to the output polynomial kernel. $f[\mbf{\Sigma}]=(\mbf{J}+\xi_1\frac{1}{p}\mbf{XX}^T)^{\owedge2}$, where the symbol $\owedge2$ means the elementwise square. In this case, $L=1$ and $\mbf{K}_1(\mbf{X})=p^{-1}\mbf{XX}^T$. Then note that
\begin{align*}
	f[\mbf{\Sigma}] & =\mbf{J}+\frac{2\xi_1}{p}\mbf{J}\odot\mbf{XX^T}+\frac{\xi_1^2}{p^2}(\mbf{XX}^T)^{\owedge2}\\
	& =\mbf{J}+2\xi_1\frac{1}{p}\mbf{XX}^T+\xi_1^2\frac{1}{p^2}(\mbf{XX}^T)^{\owedge2}.
\end{align*}
This shows that for polynomial output kernel and one input product kernel, $L'=3$ with $g_1(\xi_1)=1, g_2(\xi_1)=2\xi_1, g_3(\xi_1)=\xi_1^2$ and $h_1[\mbf{K}_1(\mbf{X})]=\mbf{J}, h_2[\mbf{K}_1(\mbf{X})]=\mbf{K}_1(\mbf{X})=p^{-1}\mbf{XX}^T, h_3[\mbf{K}_1(\mbf{X})]=p^{-2}(\mbf{XX}^T)^{\owedge2}$. The parameters $\theta_0,\ldots,\theta_L$ can be estimated via MINQUE as well. Based on the above discussion, we can see that the estimation of the variance components in KNN through MINQUE is an approximation. What we basically do here is to use a complex mixed model to approximate the KNN.

\section{Predictions}
\label{sec:pred}
In this section, we make a theoretical comparison of prediction performance between KNN and LMM. Based on our model, the best predictor for $\mbf{a}$ is given by
\begin{align*}
	\hat{\mbf{y}} & =\mbb{E}\left[\mbf{a}|\mbf{y}\right]=\mbb{E}\left[\mbb{E}\left(\mbf{a}|\mbf{y},\mbf{u}_1,\ldots,\mbf{u}_m\right)\right]\\
	& =\mbb{E}\left[\left(\sum_{j=1}^J\tau_j\mbf{K}_j(\mbf{U})\right)\left(\sum_{j=1}^J\tau_j\mbf{K}_j(\mbf{U})+\phi\mbf{I}_n\right)^{-1}\right]\mbf{y}\\
	& =\mbb{E}\left[\left(\sum_{j=1}^J\phi^{-1}\tau_j\mbf{K}_j(\mbf{U})\right)\left(\sum_{j=1}^J\phi^{-1}\tau_j\mbf{K}_j(\mbf{U})+\mbf{I}_n\right)^{-1}\right]\mbf{y}\\
	& :=\mbb{E}\left[\left(\sum_{j=1}^J\tilde{\tau}_j\mbf{K}_j(\mbf{U})\right)\left(\sum_{j=1}^J\tilde{\tau}_j\mbf{K}_j(\mbf{U})+\mbf{I}_n\right)^{-1}\right]\mbf{y},
\end{align*}
where $\tilde{\tau}_j=\tau_j\phi^{-1}, j=1,\ldots,m$. The prediction error based on $\hat{\mbf{y}}$ is given by
\begin{align*}
	R & =\left(\mbf{y}-\hat{\mbf{y}}\right)^T\left(\mbf{y}-\hat{\mbf{y}}\right)\\
	& =\mbf{y}^T\left(\mbf{I}_n-\mbb{E}\left[\left(\sum_{j=1}^J\tilde{\tau}_j\mbf{K}_j(\mbf{U})\right)\left(\sum_{j=1}^J\tilde{\tau}_j\mbf{K}_j(\mbf{U})+\mbf{I}_n\right)^{-1}\right]\right)^T\\
	& \hspace{3cm}\left(\mbf{I}_n-\mbb{E}\left[\left(\sum_{j=1}^J\tilde{\tau}_j\mbf{K}_j(\mbf{U})\right)\left(\sum_{j=1}^J\tilde{\tau}_j\mbf{K}_j(\mbf{U})+\mbf{I}_n\right)^{-1}\right]\right)\mbf{y}.
\end{align*}
Note that
\begin{align*}
	& \mbf{I}_n-\mbb{E}\left[\left(\sum_{j=1}^J\tilde{\tau}_j\mbf{K}_j(\mbf{U})\right)\left(\sum_{j=1}^J\tilde{\tau}_j\mbf{K}_j(\mbf{U})+\mbf{I}_n\right)^{-1}\right]\\
	= & \mbb{E}\left[\left(\sum_{j=1}^J\tilde{\tau}_j\mbf{K}_j(\mbf{U})+\mbf{I}_n-\sum_{j=1}^J\tilde{\tau}_j\mbf{K}_j(\mbf{U})\right)\left(\sum_{j=1}^J\tilde{\tau}_j\mbf{K}_j(\mbf{U})+\mbf{I}_n\right)^{-1}\right]\\
	= & \mbb{E}\left[\left(\sum_{j=1}^J\tilde{\tau}_j\mbf{K}_j(\mbf{U})+\mbf{I}_n\right)^{-1}\right],
\end{align*}
we have
$$
R=\mbf{y}^T\left(\mbb{E}\left[\left(\sum_{j=1}^J\tilde{\tau}_j\mbf{K}_j(\mbf{U})+\mbf{I}_n\right)^{-1}\right]\right)^2\mbf{y}.
$$
Direct evaluation of the prediction error $R$ is complicated. Instead, we approximate it based on asymptotic results. Same as the above, we focus on the case where $J=1$ and $\mbf{K}(\mbf{U})=f\left[\frac{1}{m}\mbf{UU}^T\right]$. The proof of the following Lemma can be found in Appendix \ref{Sec: proof}.

\begin{lemma}[Approximation of Prediction Error]\label{PredErrApprox}
	\begin{enumerate}[(i)]
		\item When $f(x)=x$, then as $m\to\infty$,
		$$
		R\simeq\mbf{y}^T\left(\sum_{l=1}^L\tilde{\tau}\xi\mbf{K}_l(\mbf{X})+\mbf{I}_n\right)^{-2}\mbf{y}.
		$$
		\item When $f$ is continuous and $f[\mbf{\Sigma}]\in\mcal{S}_+^n$, then as $m\to\infty$,
		$$
		R\simeq\mbf{y}^T\left(\tilde{\tau}f\left[\sum_{l=1}^L\xi\mbf{K}_l(\mbf{X})\right]+\mbf{I}_n\right)^{-2}\mbf{y}.
		$$
	\end{enumerate}
\end{lemma}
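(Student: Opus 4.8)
The plan is to start from the closed form already derived in the text, specializing to $J=1$ and writing $\tilde\tau=\tilde\tau_1$, so that
$$
R=\mbf{y}^T\left(\mbb{E}\left[\left(\tilde\tau\mbf{K}(\mbf{U})+\mbf{I}_n\right)^{-1}\right]\right)^2\mbf{y}.
$$
It therefore suffices to identify the limit as $m\to\infty$ of the matrix $\mbf{M}_m:=\mbb{E}[(\tilde\tau\mbf{K}(\mbf{U})+\mbf{I}_n)^{-1}]$: once $\mbf{M}_m\to\mbf{M}_\infty$ in Frobenius norm, continuity of the scalar map $\mbf{M}\mapsto\mbf{y}^T\mbf{M}^2\mbf{y}$ gives $R\simeq\mbf{y}^T\mbf{M}_\infty^2\mbf{y}$, which is the claim. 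Throughout, $\mbf{y}$ is treated as fixed and the expectation is over $\mbf{U}$, whose columns $\mbf{u}_1,\ldots,\mbf{u}_m$ are i.i.d.\ $\mcal{N}_n(\mbf{0},\mbf{\Sigma})$ with $\mbf{\Sigma}=\sum_{l=1}^L\xi_l\mbf{K}_l(\mbf{X})$.

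For part (i), $f(x)=x$ gives $\mbf{K}(\mbf{U})=\frac1m\mbf{U}\mbf{U}^T=\frac1m\sum_{k=1}^m\mbf{u}_k\mbf{u}_k^T$, and the strong law of large numbers applied to the i.i.d.\ rank-one summands $\mbf{u}_k\mbf{u}_k^T$ (with $\mbb{E}[\mbf{u}_k\mbf{u}_k^T]=\mbf{\Sigma}$) yields $\frac1m\mbf{U}\mbf{U}^T\to\mbf{\Sigma}$ almost surely, hence $\tilde\tau\mbf{K}(\mbf{U})+\mbf{I}_n\to\tilde\tau\mbf{\Sigma}+\mbf{I}_n$ a.s. Since $\mbf{K}(\mbf{U})\succeq\mbf{0}$ and $\tilde\tau>0$ we have $\tilde\tau\mbf{K}(\mbf{U})+\mbf{I}_n\succeq\mbf{I}_n$, so $\|(\tilde\tau\mbf{K}(\mbf{U})+\mbf{I}_n)^{-1}\|_{\mrm{op}}\le 1$ uniformly in $m$ and $\omega$; combined with the a.s.\ convergence and the continuity of inversion on $\{\mbf{M}\succeq\mbf{I}_n\}$, the bounded convergence theorem lets the expectation pass through, giving $\mbf{M}_m\to(\tilde\tau\mbf{\Sigma}+\mbf{I}_n)^{-1}$. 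Substituting $\mbf{\Sigma}=\sum_l\xi_l\mbf{K}_l(\mbf{X})$ and squaring produces the stated expression.

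For part (ii), continuity of $f$ and the a.s.\ convergence $\frac1m\mbf{U}\mbf{U}^T\to\mbf{\Sigma}$ give entrywise, and hence in norm, $\mbf{K}(\mbf{U})=f[\frac1m\mbf{U}\mbf{U}^T]\to f[\mbf{\Sigma}]$ a.s.\ (the a.s.\ counterpart of the in-probability statement (\ref{KmatApprox})). The hypothesis $f[\mbf{\Sigma}]\in\mcal{S}_+^n$ makes $\tilde\tau f[\mbf{\Sigma}]+\mbf{I}_n$ positive definite, so inversion is continuous at the limit and $(\tilde\tau\mbf{K}(\mbf{U})+\mbf{I}_n)^{-1}\to(\tilde\tau f[\mbf{\Sigma}]+\mbf{I}_n)^{-1}$ a.s. To move the expectation inside, I would use the resolvent identity
$$
\left(\tilde\tau\mbf{K}(\mbf{U})+\mbf{I}_n\right)^{-1}-\left(\tilde\tau f[\mbf{\Sigma}]+\mbf{I}_n\right)^{-1}=-\tilde\tau\left(\tilde\tau\mbf{K}(\mbf{U})+\mbf{I}_n\right)^{-1}\bigl(\mbf{K}(\mbf{U})-f[\mbf{\Sigma}]\bigr)\left(\tilde\tau f[\mbf{\Sigma}]+\mbf{I}_n\right)^{-1},
$$
split on the event $\Omega_m=\{\|\mbf{K}(\mbf{U})-f[\mbf{\Sigma}]\|\le\varepsilon\}$ (whose probability tends to $1$): on $\Omega_m$ the leftmost resolvent is uniformly bounded while $\mbf{K}(\mbf{U})-f[\mbf{\Sigma}]$ is of order $\varepsilon$, and on $\Omega_m^c$ a crude bound on $\|(\tilde\tau\mbf{K}(\mbf{U})+\mbf{I}_n)^{-1}\|$ together with a moment estimate $\sup_m\mbb{E}\|(\tilde\tau\mbf{K}(\mbf{U})+\mbf{I}_n)^{-1}\|^{q}<\infty$ for some $q>1$ makes the contribution negligible; then conclude as in part (i).

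The main obstacle is exactly this last step: when $f$ is not a positive-definite function, $\mbf{K}(\mbf{U})$ need not be positive semidefinite, so $\tilde\tau\mbf{K}(\mbf{U})+\mbf{I}_n$ may be arbitrarily ill-conditioned on a set of small but positive probability, and justifying the interchange of limit and expectation requires the uniform-integrability moment bound above. For the polynomial output kernels actually used in the paper $f$ has nonnegative Taylor coefficients, so by the Schur product theorem $\mbf{K}(\mbf{U})=f[\frac1m\mbf{U}\mbf{U}^T]\succeq\mbf{0}$, the bound $\|(\tilde\tau\mbf{K}(\mbf{U})+\mbf{I}_n)^{-1}\|_{\mrm{op}}\le 1$ is restored, and the argument collapses to the clean bounded-convergence reasoning of part (i); I would either invoke this structure or carry the moment bound to cover a general continuous $f$ with $f[\mbf{\Sigma}]\in\mcal{S}_+^n$. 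Part (i) itself is essentially routine once the SLLN and the operator-norm bound are in place.
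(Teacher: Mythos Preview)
Your proof follows essentially the same route as the paper: for part (i) both arguments use the SLLN on $\frac1m\sum_k\mbf{u}_k\mbf{u}_k^T$, the continuity of matrix inversion, the uniform operator-norm bound $\|(\tilde\tau\mbf{K}(\mbf{U})+\mbf{I}_n)^{-1}\|_{\mrm{op}}\le 1$ coming from $\mbf{K}(\mbf{U})\succeq\mbf{0}$, and then bounded convergence to pass the expectation through; for part (ii) both start from $\mbf{K}(\mbf{U})\to f[\mbf{\Sigma}]$ and aim to conclude $\mbb{E}[(\tilde\tau\mbf{K}(\mbf{U})+\mbf{I}_n)^{-1}]\to(\tilde\tau f[\mbf{\Sigma}]+\mbf{I}_n)^{-1}$.

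The one noteworthy difference is that you are more scrupulous in part (ii). The paper simply invokes ``Bounded Convergence Theorem and Continuous Mapping Theorem'' under the (trivial) hypothesis $\|\mbf{K}(\mbf{U})\|_{\mrm{op}}<\infty$ a.s., without checking that $(\tilde\tau\mbf{K}(\mbf{U})+\mbf{I}_n)^{-1}$ is actually dominated when $\mbf{K}(\mbf{U})$ is not guaranteed to be positive semidefinite. You correctly flag this as the real obstacle and offer two fixes: a resolvent-identity plus uniform-integrability argument for general continuous $f$, or the observation that for the polynomial output kernels the paper actually uses, $f$ has nonnegative Taylor coefficients and the Schur product theorem restores $\mbf{K}(\mbf{U})\succeq\mbf{0}$, collapsing the argument to part (i). This Schur-product remark is the cleanest way to close the gap for the cases the paper cares about, and it is a genuine improvement over the paper's presentation; the general uniform-integrability route would need an explicit moment bound that neither you nor the paper supplies.
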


Now we compare the average prediction error between kernel neural network and linear mixed model. For a linear mixed model, the prediction error using best predictor can be obtained as follow. The proof can be found in Appendix \ref{Sec: proof}.
\begin{prop}[Prediction Error for a Linear Mixed Model]
	Consider the linear mixed effect model
	\begin{align*}
		\mbf{y} & =\mbf{a}+\mbf{\epsilon};\\
		\mbf{a} & \sim\mcal{N}_n\left(\mbf{0},\sigma_R^2\mbf{\Sigma}\right);\\
		\mbf{\epsilon} & \sim\mcal{N}_n\left(\mbf{0},\phi\mbf{I}_n\right).
	\end{align*}
	The prediction error based on quadratic loss and the best predictor $\hat{\mbf{y}}=\mbb{E}[\mbf{a}|\mbf{y}]=\tilde{\sigma}_R^2\mbf{\Sigma}\left(\tilde{\sigma}_R^2\mbf{\Sigma}+\mbf{I}_n\right)^{-1}\mbf{y}$ $(\tilde{\sigma}_R^2=\sigma_R^2\phi^{-1})$ is given by
	$$
	PE_{LMM}=\phi\sum_{i=1}^n\left(\tilde{\sigma}_R^2\lambda_i(\mbf{\Sigma})+1\right)^{-1},
	$$
	where $PE_{LMM}$ is the average prediction error for the linear mixed model.
\end{prop}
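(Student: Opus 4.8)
The plan is to reduce the average prediction error to the expectation of a single deterministic quadratic form in $\mbf{y}$ and then evaluate it by the standard trace identity for (mean-zero) Gaussian vectors. First I would record the best predictor: since $(\mbf{a},\mbf{y})$ is jointly Gaussian with $\mrm{Cov}(\mbf{a},\mbf{y})=\mrm{Var}(\mbf{a})=\sigma_R^2\mbf{\Sigma}$ and $\mrm{Var}(\mbf{y})=\sigma_R^2\mbf{\Sigma}+\phi\mbf{I}_n=\phi\left(\tilde{\sigma}_R^2\mbf{\Sigma}+\mbf{I}_n\right)$, the Gaussian conditional mean formula gives $\hat{\mbf{y}}=\mbb{E}[\mbf{a}\mid\mbf{y}]=\sigma_R^2\mbf{\Sigma}\left(\sigma_R^2\mbf{\Sigma}+\phi\mbf{I}_n\right)^{-1}\mbf{y}$, and pulling $\phi^{-1}$ out of the inverse yields exactly the stated form $\hat{\mbf{y}}=\tilde{\sigma}_R^2\mbf{\Sigma}\left(\tilde{\sigma}_R^2\mbf{\Sigma}+\mbf{I}_n\right)^{-1}\mbf{y}$ with $\tilde{\sigma}_R^2=\sigma_R^2\phi^{-1}$.

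Next I would simplify the residual operator. Writing $\mbf{y}-\hat{\mbf{y}}=\left(\mbf{I}_n-\tilde{\sigma}_R^2\mbf{\Sigma}\left(\tilde{\sigma}_R^2\mbf{\Sigma}+\mbf{I}_n\right)^{-1}\right)\mbf{y}$ and combining the two terms over the common right factor $\left(\tilde{\sigma}_R^2\mbf{\Sigma}+\mbf{I}_n\right)^{-1}$, the numerator telescopes, $\tilde{\sigma}_R^2\mbf{\Sigma}+\mbf{I}_n-\tilde{\sigma}_R^2\mbf{\Sigma}=\mbf{I}_n$, exactly as in the identity used in Section~\ref{sec:pred} for the KNN. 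Hence $\mbf{y}-\hat{\mbf{y}}=\left(\tilde{\sigma}_R^2\mbf{\Sigma}+\mbf{I}_n\right)^{-1}\mbf{y}$ and the (random) quadratic loss is $\left(\mbf{y}-\hat{\mbf{y}}\right)^T\left(\mbf{y}-\hat{\mbf{y}}\right)=\mbf{y}^T\left(\tilde{\sigma}_R^2\mbf{\Sigma}+\mbf{I}_n\right)^{-2}\mbf{y}$; here $\mbf{\Sigma}$, and therefore the whole matrix in the quadratic form, is deterministic, which is precisely what makes the LMM computation clean (in contrast to the KNN case, where the analogous matrix still depends on $\mbf{U}$).

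Then I would take expectations over the marginal law of $\mbf{y}$. Since $\mbb{E}[\mbf{y}]=\mbf{0}$, the identity $\mbb{E}\left[\mbf{y}^T\mbf{M}\mbf{y}\right]=\mrm{tr}\left(\mbf{M}\,\mrm{Var}(\mbf{y})\right)$ with $\mbf{M}=\left(\tilde{\sigma}_R^2\mbf{\Sigma}+\mbf{I}_n\right)^{-2}$ and $\mrm{Var}(\mbf{y})=\phi\left(\tilde{\sigma}_R^2\mbf{\Sigma}+\mbf{I}_n\right)$ gives
$$PE_{LMM}=\phi\,\mrm{tr}\left(\left(\tilde{\sigma}_R^2\mbf{\Sigma}+\mbf{I}_n\right)^{-2}\left(\tilde{\sigma}_R^2\mbf{\Sigma}+\mbf{I}_n\right)\right)=\phi\,\mrm{tr}\left(\left(\tilde{\sigma}_R^2\mbf{\Sigma}+\mbf{I}_n\right)^{-1}\right).$$
Diagonalizing $\mbf{\Sigma}=\mbf{O}\,\mrm{diag}\left(\lambda_1(\mbf{\Sigma}),\ldots,\lambda_n(\mbf{\Sigma})\right)\mbf{O}^T$ with $\mbf{O}$ orthogonal turns $\tilde{\sigma}_R^2\mbf{\Sigma}+\mbf{I}_n$ into $\mbf{O}\,\mrm{diag}\left(\tilde{\sigma}_R^2\lambda_i(\mbf{\Sigma})+1\right)\mbf{O}^T$, whose inverse has trace $\sum_{i=1}^n\left(\tilde{\sigma}_R^2\lambda_i(\mbf{\Sigma})+1\right)^{-1}$, which is the claimed formula.

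There is no genuine obstacle here; every step is a one-line linear-algebra manipulation. The only two points requiring care are: (a) correctly pulling the scalar $\phi$ out of $\mrm{Var}(\mbf{y})$ so that it cancels exactly \emph{one} power of $\left(\tilde{\sigma}_R^2\mbf{\Sigma}+\mbf{I}_n\right)$ rather than two (this is where the $\phi$ prefactor and the single inverse power in the final answer come from); and (b) being explicit that ``average prediction error'' denotes the expectation of $\left(\mbf{y}-\hat{\mbf{y}}\right)^T\left(\mbf{y}-\hat{\mbf{y}}\right)$ over the joint law of $(\mbf{a},\mbf{\epsilon})$—equivalently the marginal law of $\mbf{y}$—which is what licenses the trace identity.
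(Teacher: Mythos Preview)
Your proposal is correct and follows essentially the same route as the paper's own proof: simplify $\mbf{I}_n-\tilde{\sigma}_R^2\mbf{\Sigma}(\tilde{\sigma}_R^2\mbf{\Sigma}+\mbf{I}_n)^{-1}$ to $(\tilde{\sigma}_R^2\mbf{\Sigma}+\mbf{I}_n)^{-1}$, apply $\mbb{E}[\mbf{y}^T\mbf{M}\mbf{y}]=\mrm{tr}(\mbf{M}\,\mrm{Var}(\mbf{y}))$ with $\mrm{Var}(\mbf{y})=\phi(\tilde{\sigma}_R^2\mbf{\Sigma}+\mbf{I}_n)$, cancel one factor, and read off the eigenvalue sum. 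The only differences are expository: you additionally justify the form of $\hat{\mbf{y}}$ via the Gaussian conditional mean and spell out the diagonalization, both of which the paper takes for granted.
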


\begin{prop}\label{propKNN1}
	Assuming that $\sigma^2=\phi$ and $\tilde{\sigma}_R^2\leq\tilde{\tau}\min_{1\leq l\leq L}\xi_l$, we have
	$$
	PE_{KNN}\precsim PE_{LMM},
	$$ 
	where $PE_{KNN}$ stands for average prediction error for kernel neural network.
\end{prop}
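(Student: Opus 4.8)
The plan is to reduce everything to a comparison of eigenvalues, exploiting the fact that both prediction errors have been written as quadratic forms $\mbf{y}^T(\cdot)^{-2}\mbf{y}$ in $\mbf{y}$, whose expectations are traces against $\mrm{Var}[\mbf{y}]$. First I would take expectations over $\mbf{y}$ in Lemma \ref{PredErrApprox}(ii): since $\mbb{E}[\mbf{y}\mbf{y}^T]=\mbf{V}$ and, by the computation in Section 2, $\mbf{V}\simeq\tau f[\mbf{\Sigma}]+\phi\mbf{I}_n$ with $\mbf{\Sigma}=\sum_{l=1}^L\xi_l\mbf{K}_l(\mbf{X})$, I get
$$
PE_{KNN}\simeq\mrm{tr}\!\left[\left(\tilde\tau f\!\left[\textstyle\sum_l\xi_l\mbf{K}_l(\mbf{X})\right]+\mbf{I}_n\right)^{-2}\!\left(\tau f[\mbf{\Sigma}]+\phi\mbf{I}_n\right)\right]
=\phi\sum_{i=1}^n\frac{\tilde\tau\lambda_i+1}{(\tilde\tau\lambda_i+1)^2}=\phi\sum_{i=1}^n\frac{1}{\tilde\tau\lambda_i(f[\mbf{\Sigma}])+1},
$$
where $\lambda_i(f[\mbf{\Sigma}])$ are the eigenvalues of $f[\mbf{\Sigma}]$ (here I use $\sigma^2=\phi$ so the numerator collapses). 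Likewise the Proposition on $PE_{LMM}$ gives $PE_{LMM}=\phi\sum_i(\tilde\sigma_R^2\lambda_i(\mbf{\Sigma})+1)^{-1}$. Thus it suffices to show, term by term, that $\tilde\tau\lambda_i(f[\mbf{\Sigma}])\geq\tilde\sigma_R^2\lambda_i(\mbf{\Sigma})$ for every $i$, since $x\mapsto 1/(x+1)$ is decreasing on $[0,\infty)$.

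Next I would establish the eigenvalue inequality using the hypothesis $\tilde\sigma_R^2\leq\tilde\tau\min_l\xi_l$. Write $\mbf{\Sigma}=\sum_l\xi_l\mbf{K}_l(\mbf{X})$; each $\mbf{K}_l(\mbf{X})$ is PSD, so $\mbf{\Sigma}\succeq(\min_l\xi_l)\sum_l\mbf{K}_l(\mbf{X})\succeq 0$, but more to the point I want to compare $f[\mbf{\Sigma}]$ with $\mbf{\Sigma}$. For the polynomial-type kernels $f$ that the paper restricts to (the Generalized Linear Separable Condition, with $f$ having nonnegative Taylor coefficients, e.g. $f(x)=(1+x)^2$), $f[\mbf{\Sigma}]$ is a sum of Hadamard powers of $\mbf{\Sigma}$ plus a nonnegative combination of $\mbf{J}$ and $\mbf{\Sigma}$ itself, and the Schur product theorem makes every such term PSD; in particular $f[\mbf{\Sigma}]\succeq c_1\mbf{\Sigma}$ where $c_1=f'(0)\geq 1$ is the linear coefficient (for $f(x)=(1+x)^2$, $c_1=2$). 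Combining $f[\mbf{\Sigma}]\succeq\mbf{\Sigma}$ in the Loewner order with Weyl's monotonicity theorem yields $\lambda_i(f[\mbf{\Sigma}])\geq\lambda_i(\mbf{\Sigma})$ for all $i$, and then
$$
\tilde\tau\,\lambda_i(f[\mbf{\Sigma}])\;\geq\;\tilde\tau\,\lambda_i(\mbf{\Sigma})\;\geq\;\tilde\sigma_R^2\,\lambda_i(\mbf{\Sigma}),
$$
the last step being exactly the assumption $\tilde\sigma_R^2\leq\tilde\tau\min_l\xi_l\leq\tilde\tau$ — here I should double-check whether the comparison wanted is against $\mbf{\Sigma}$ with its own $\xi_l$ absorbed or against $\sum_l\xi_l\mbf{K}_l$, and align the normalization of $\mbf{\Sigma}$ in the KNN statement with that in the LMM Proposition. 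Summing over $i$ gives $PE_{KNN}\precsim PE_{LMM}$.

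The main obstacle I anticipate is controlling the approximation error hidden in the ``$\simeq$'' signs, i.e. making the passage from Lemma \ref{PredErrApprox} (which is an $m\to\infty$ statement about $R$ as a random quadratic form) to a clean inequality between the averaged quantities $PE_{KNN}$ and $PE_{LMM}$ rigorous; this is presumably why the conclusion is stated with ``$\precsim$'' (asymptotically less than) rather than ``$\leq$''. I would handle it by invoking Lemma \ref{EFuncProdApproxLm} and equation (\ref{KmatApprox}) to replace $\mbb{E}[(\tilde\tau\mbf{K}(\mbf{U})+\mbf{I}_n)^{-1}]^2$ by $(\tilde\tau f[\mbf{\Sigma}]+\mbf{I}_n)^{-2}$ up to $o_P(1)$, uniformly on the spectrum since all matrices involved have eigenvalues bounded below by $1$, so the matrix inverse is $1$-Lipschitz; then the trace computation above is exact in the limit. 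A secondary subtlety is the boundary case where MINQUE returns a negative $\hat\tau$ and projection onto $\mcal{S}_n^+$ kicks in, but for the population-level comparison in this Proposition the true $\tilde\tau>0$ is used, so that issue does not arise here.
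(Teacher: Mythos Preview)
Your first step---taking the expectation of the quadratic form and collapsing it to $\phi\sum_i(\tilde\tau\lambda_i+1)^{-1}$---is exactly what the paper does. The gap is in the comparison step, and it is precisely the point you yourself flagged as needing a double-check.

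Proposition~\ref{propKNN1} in the paper is the special case $f(x)=x$; the general $f$ is handled separately in Proposition~\ref{propKNN2}. So your Schur-product argument for $f[\mbf{\Sigma}]\succeq\mbf{\Sigma}$ is vacuous here: with $f$ the identity, $f[\mbf{\Sigma}]=\mbf{\Sigma}=\sum_l\xi_l\mbf{K}_l(\mbf{X})$ trivially. The actual work is to compare $\sum_l\xi_l\mbf{K}_l(\mbf{X})$ against the LMM baseline, and the paper takes that baseline to be the \emph{unweighted} sum $\mbf{\Sigma}_{\mathrm{LMM}}=\sum_l\mbf{K}_l(\mbf{X})$. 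This is where $\min_l\xi_l$ enters: since each $\mbf{K}_l(\mbf{X})$ is positive semidefinite,
\[
\sum_{l=1}^L\xi_l\mbf{K}_l(\mbf{X})\ \succeq\ \Bigl(\min_{1\leq l\leq L}\xi_l\Bigr)\sum_{l=1}^L\mbf{K}_l(\mbf{X}),
\]
and hence by Weyl monotonicity $\tilde\tau\,\lambda_i\bigl(\sum_l\xi_l\mbf{K}_l\bigr)\geq\tilde\tau\,(\min_l\xi_l)\,\lambda_i\bigl(\sum_l\mbf{K}_l\bigr)\geq\tilde\sigma_R^2\,\lambda_i\bigl(\sum_l\mbf{K}_l\bigr)$, the last inequality being exactly the hypothesis. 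Your version uses only the weaker consequence $\tilde\sigma_R^2\leq\tilde\tau$ and implicitly sets $\mbf{\Sigma}_{\mathrm{LMM}}=\sum_l\xi_l\mbf{K}_l$; with that choice the hypothesis $\tilde\sigma_R^2\leq\tilde\tau\min_l\xi_l$ would be left unexplained and the comparison somewhat circular (the LMM would already ``know'' the $\xi_l$).

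In short: your trace reduction and termwise-eigenvalue strategy are right, but the specific inequality needed is the Loewner bound above, not $f[\mbf{\Sigma}]\succeq\mbf{\Sigma}$. The latter---packaged as the assumption $\lambda_1\bigl((f-\iota)[\mbf{\Sigma}]\bigr)\geq0$---is exactly what the paper adds in Proposition~\ref{propKNN2} to extend the result to nonlinear $f$, so the Schur-product material you wrote belongs there rather than here.
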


\begin{proof}
	For the kernel neural network, the average prediction error is given by
	\begin{align*}
		PE_{KNN} & = \mbb{E}\left[\mbf{y}^T\mbf{A}^2\mbf{y}\right]=\mbb{E}\left[\mbb{E}\left(\mbf{y}^T\mbf{A}^2\mbf{y}|\mbf{u}_1,\ldots,\mbf{u}_m\right)\right]\\
		& =\phi\mbb{E}\left[\mrm{tr}\left(\mbf{A}^2\left(\tilde{\tau}\mbf{K}(\mbf{U})+\mbf{I}_n\right)\right)\right]\\
		& =\phi\mrm{tr}\left\{\left(\mbb{E}\left[\left(\tilde{\tau}\mbf{K}(\mbf{U})+\mbf{I}_n\right)^{-1}\right]\right)^2\mbb{E}\left[\tilde{\tau}\mbf{K}(\mbf{U})+\mbf{I}_n\right]\right\}\\
		& \simeq\phi\mrm{tr}\left\{\left(\tilde{\tau}\sum_{l=1}^L\xi_l\mbf{K}_l(\mbf{X})+\mbf{I}_n\right)^{-2}\left(\tilde{\tau}\sum_{l=1}^L\xi_l\mbf{K}_l(\mbf{X})+\mbf{I}_n\right)\right\}\\
		& =\phi\mrm{tr}\left\{\left(\tilde{\tau}\sum_{l=1}^L\xi_l\mbf{K}_l(\mbf{X})+\mbf{I}_n\right)^{-1}\right\}\\
		& \leq\phi\sum_{i=1}^n\left(\tilde{\tau}\min_{1\leq l\leq L}\xi_l\lambda_i\left(\sum_{l=1}^L\mbf{K}_l(\mbf{X})\right)+1\right)^{-1}
	\end{align*}
	Under the assumptions in this proposition and the linear mixed model with $\mbf{\Sigma}=\sum_{l=1}^L\mbf{K}_l(\mbf{X})$, we have
	$$
	\frac{\phi\left(\tilde{\tau}\min_{1\leq l\leq L}\lambda_i(\mbf{\Sigma})+1\right)^{-1}}{\sigma^2\left(\tilde{\sigma}_R^2\lambda_i(\mbf{\Sigma})+1\right)^{-1}}=\frac{\phi}{\sigma^2}\frac{\tilde{\sigma}_R^2\lambda_i(\mbf{\Sigma})+1}{\tilde{\tau}\min_{1\leq l\leq L}\xi_l\lambda_i(\mbf{\Sigma})+1}\leq1,
	$$
	which implies that $PE_{KNN}\precsim PE_{LMM}$.
\end{proof}

\begin{remark}
	The result for Proposition \ref{propKNN1} can be illustrated by using Figure \ref{propKNN1fig}. As shown in the Figure \ref{propKNN1fig} for the case of $L=1$, there are two paths from the kernel matrix based on $\mbf{X}$ to the response $\mbf{y}$. One is the kernel neural network path (solid line) and the other is the linear mixed model path (dash-dotted line). The intuition behind the assumption $\tilde{\sigma}_R^2\leq\tilde{\tau}\xi$ is that the kernel neural network should explain more variations than the linear mixed model as it has two portions.
	\begin{figure}[htbp]
		\centering
		\includegraphics[width=0.7\textwidth]{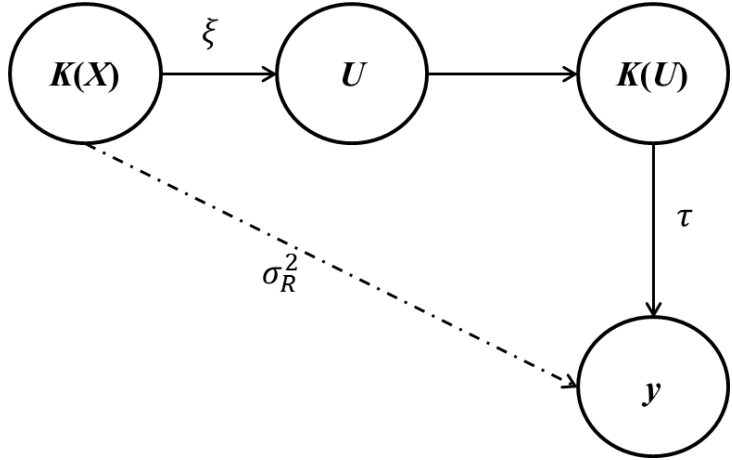}
		\caption{The intuition under the assumption $\tilde{\sigma}_R^2\leq\tilde{\tau}\xi$ in Proposition \ref{propKNN1}.}\label{propKNN1fig}
	\end{figure}
\end{remark}

We then extend the result to $\mbf{K}(\mbf{U})=f\left[\frac{1}{m}\mbf{UU}^T\right]$, where $f$ is as described in Lemma \ref{PredErrApprox}(ii).

\begin{prop}\label{propKNN2}
	Under the above notations, assuming that $\sigma^2=\phi$, $\tilde{\sigma}_R^2\leq\tilde{\tau}\min_{1\leq l\leq L}\xi_l$ and $\lambda_1\left((f-\iota)\left[\sum_{l=1}^L\xi_l\mbf{K}_l(\mbf{X})\right]\right)\geq0$ with $|f''(x)|\leq M$ for some $M>0$ and all $x$ between $\min_{i,j}\sigma_{ij}$ and $\max_{i,j}\frac{\mbf{v}_i^T\mbf{v}_j}{m}$, we have
	$$
	PE_{KNN}\precsim PE_{LMM},
	$$
	where $\lambda_1(\mbf{\Sigma})$ is the smallest eigenvalue of the matrix $\mbf{\Sigma}$.
\end{prop}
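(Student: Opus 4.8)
The plan is to follow the same three-step structure as the proof of Proposition~\ref{propKNN1} — (a) a conditioning computation that expresses $PE_{KNN}$ as a trace, (b) replacement of the finite-$m$ expectations by their $m\to\infty$ limits, and (c) an operator-inequality chain followed by a termwise comparison with $PE_{LMM}$ — the only new ingredient being the spectral hypothesis $\lambda_1\big((f-\iota)[\mbf{\Sigma}]\big)\geq 0$, which both guarantees $f[\mbf{\Sigma}]\in\mcal{S}_+^n$ (so that Lemma~\ref{PredErrApprox}(ii) applies) and supplies the domination $f[\mbf{\Sigma}]\succeq\mbf{\Sigma}$ needed in step (c). Throughout, write $\mbf{\Sigma}=\sum_{l=1}^L\xi_l\mbf{K}_l(\mbf{X})$ and let $\mbf{\Sigma}_0=\sum_{l=1}^L\mbf{K}_l(\mbf{X})$ be the covariance of the linear mixed model against which we compare, as in Proposition~\ref{propKNN1}.

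For steps (a)--(b): conditioning on $\mbf{u}_1,\ldots,\mbf{u}_m$ the vector $\mbf{y}$ is centered Gaussian with covariance $\phi\big(\tilde{\tau}\mbf{K}(\mbf{U})+\mbf{I}_n\big)$, exactly as in Proposition~\ref{propKNN1}, so that
$$
PE_{KNN}=\phi\,\mrm{tr}\left\{\Big(\mbb{E}\big[(\tilde{\tau}\mbf{K}(\mbf{U})+\mbf{I}_n)^{-1}\big]\Big)^2\,\mbb{E}\big[\tilde{\tau}\mbf{K}(\mbf{U})+\mbf{I}_n\big]\right\}.
$$
By Lemmas~\ref{funcProdApproxLm}--\ref{EFuncProdApproxLm} together with the SLLN one has $\mbf{K}(\mbf{U})\xrightarrow{P}f[\mbf{\Sigma}]$ elementwise; invoking Lemma~\ref{PredErrApprox}(ii) (whose hypotheses — $f$ continuous, $f[\mbf{\Sigma}]\in\mcal{S}_+^n$, $|f''|\leq M$ — all hold here) together with a uniform-integrability argument, both expectations above converge to their limits, which gives
$$
PE_{KNN}\simeq\phi\,\mrm{tr}\left\{\big(\tilde{\tau}f[\mbf{\Sigma}]+\mbf{I}_n\big)^{-1}\right\}=\phi\sum_{i=1}^n\big(\tilde{\tau}\lambda_i(f[\mbf{\Sigma}])+1\big)^{-1}.
$$

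For step (c): the hypothesis $\lambda_1\big((f-\iota)[\mbf{\Sigma}]\big)\geq 0$ says $f[\mbf{\Sigma}]\succeq\mbf{\Sigma}$, and since each $\mbf{K}_l(\mbf{X})$ is positive semidefinite, $\mbf{\Sigma}=\sum_{l}\xi_l\mbf{K}_l(\mbf{X})\succeq\big(\min_{1\leq l\leq L}\xi_l\big)\mbf{\Sigma}_0$; hence $\tilde{\tau}f[\mbf{\Sigma}]+\mbf{I}_n\succeq\tilde{\tau}\big(\min_l\xi_l\big)\mbf{\Sigma}_0+\mbf{I}_n$. Since $0\preceq\mbf{A}\preceq\mbf{B}$ implies $\mbf{B}^{-1}\preceq\mbf{A}^{-1}$, taking traces yields
$$
PE_{KNN}\precsim\phi\sum_{i=1}^n\Big(\tilde{\tau}\big(\min_{1\leq l\leq L}\xi_l\big)\lambda_i(\mbf{\Sigma}_0)+1\Big)^{-1},
$$
and then, exactly as at the end of the proof of Proposition~\ref{propKNN1}, I would compare this with $PE_{LMM}=\phi\sum_{i=1}^n\big(\tilde{\sigma}_R^2\lambda_i(\mbf{\Sigma}_0)+1\big)^{-1}$ term by term: since $\sigma^2=\phi$ and $\tilde{\sigma}_R^2\leq\tilde{\tau}\min_l\xi_l$, each ratio is at most $1$, which gives $PE_{KNN}\precsim PE_{LMM}$.

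The main obstacle is step (b) — the passage, inside the trace, from $\mbb{E}\big[(\tilde{\tau}\mbf{K}(\mbf{U})+\mbf{I}_n)^{-1}\big]$ to $(\tilde{\tau}f[\mbf{\Sigma}]+\mbf{I}_n)^{-1}$. The delicate point is that $\mbf{K}(\mbf{U})=f[\tfrac1m\mbf{U}\mbf{U}^T]$ need not be positive semidefinite, so $\tilde{\tau}\mbf{K}(\mbf{U})+\mbf{I}_n$ could be nearly singular on a small-probability event and its inverse could blow up; ruling this out and upgrading convergence in probability to convergence of expectations is precisely what the exponential concentration bound of Lemma~\ref{funcProdApproxLm} and the uniform second-derivative bound $|f''|\leq M$ are for, and it is the content already packaged in Lemma~\ref{PredErrApprox}(ii). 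Everything downstream — the operator inequalities, the monotonicity of the trace, and the termwise comparison — is routine and mirrors Proposition~\ref{propKNN1}.
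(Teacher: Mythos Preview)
Your proof is correct and follows the same overall architecture as the paper's: the trace formula for $PE_{KNN}$, the asymptotic replacement via Lemma~\ref{PredErrApprox}(ii), and then a spectral bound leading to the termwise comparison with $PE_{LMM}$.

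The only place you diverge slightly from the paper is in the mechanics of step~(c). The paper first passes to eigenvalues, writes $f[\mbf{\Sigma}]=(f-\iota)[\mbf{\Sigma}]+\mbf{\Sigma}\succeq(f-\iota)[\mbf{\Sigma}]+(\min_l\xi_l)\mbf{\Sigma}_0$, and then invokes the Weyl-type inequality (Corollary~4.3.15 in Horn--Johnson), $\lambda_i(A+B)\geq\lambda_i(A)+\lambda_1(B)$, to isolate $\lambda_1\big((f-\iota)[\mbf{\Sigma}]\big)\geq 0$ and drop it. You instead stay in the L\"owner order throughout: from $f[\mbf{\Sigma}]\succeq\mbf{\Sigma}\succeq(\min_l\xi_l)\mbf{\Sigma}_0$ you get the operator inequality $\tilde{\tau}f[\mbf{\Sigma}]+\mbf{I}_n\succeq\tilde{\tau}(\min_l\xi_l)\mbf{\Sigma}_0+\mbf{I}_n$ directly, then use operator monotonicity of the inverse and take the trace. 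Both routes arrive at the identical bound $\phi\sum_i\big(\tilde{\tau}(\min_l\xi_l)\lambda_i(\mbf{\Sigma}_0)+1\big)^{-1}$; yours is marginally cleaner in that it avoids the detour through individual eigenvalues, while the paper's makes the role of the hypothesis $\lambda_1\big((f-\iota)[\mbf{\Sigma}]\big)\geq0$ more visibly explicit in the final display.
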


\begin{proof}
	Note that
	\begin{align*}
		PE_{KNN} & =\phi\mrm{tr}\left\{\left(\mbb{E}\left[\left(\tilde{\tau}\mbf{K}(\mbf{U})+\mbf{I}_n\right)^{-1}\right]\right)^2\mbb{E}\left[\tilde{\tau}\mbf{K}(\mbf{U})+\mbf{I}_n\right]\right\}\\
		& \simeq\phi\mrm{tr}\left\{\left(\tilde{\tau}f\left[\sum_{l=1}^L\xi_l\mbf{K}_l(\mbf{X})\right]+\mbf{I}_n\right)^{-1}\right\}\\
		& =\phi\sum_{i=1}^n\frac{1}{\tilde{\tau}\lambda_i\left(f\left[\sum_{l=1}^L\xi_l\mbf{K}_l(\mbf{X})\right]\right)+1}\\
		& \leq\phi\sum_{i=1}^n\frac{1}{\tilde{\tau}\lambda_i\left((f-\iota)\left[\sum_{l=1}^L\xi_l\mbf{K}_l(\mbf{X})\right]+\min_{1\leq l\leq L}\xi_l\sum_{l=1}^L\mbf{K}_l(\mbf{X})\right)+1},
	\end{align*}
	where $\iota:\mbf{\Sigma}\mapsto\mbf{\Sigma}$ is the identity map. Corollary 4.3.15 in \citet{horn_johnson_2012} implies that
	$$
	PE_{KNN}\precsim\phi\sum_{i=1}^n\frac{1}{\tilde{\tau}\min_{1\leq l\leq L}\xi_l\lambda_i(\mbf{K}_l(X))+\tilde{\tau}\lambda_1\left((f-\iota)\left[\sum_{l=1}^L\xi_l\mbf{K}_l(\mbf{X})\right]\right)+1}
	$$
\end{proof}

\begin{corollary}
	If $f\left[\sum_{l=1}^L\xi_l\mbf{K}_l(\mbf{X})\right]-\sum_{l=1}^L\xi_l\mbf{K}_l(\mbf{X})$ is positive semidefinite, then
	$$
	APEKNN\precsim APELMM.
	$$
\end{corollary}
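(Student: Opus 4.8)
The plan is to read this corollary as the natural repackaging of Proposition \ref{propKNN2}. The only conceptual point is the elementary fact that a real symmetric matrix is positive semidefinite if and only if its smallest eigenvalue is nonnegative. Applying this to $(f-\iota)\left[\sum_{l=1}^L\xi_l\mbf{K}_l(\mbf{X})\right]=f\left[\sum_{l=1}^L\xi_l\mbf{K}_l(\mbf{X})\right]-\sum_{l=1}^L\xi_l\mbf{K}_l(\mbf{X})$, the hypothesis ``$f\left[\sum_l\xi_l\mbf{K}_l(\mbf{X})\right]-\sum_l\xi_l\mbf{K}_l(\mbf{X})$ is positive semidefinite'' is \emph{exactly} the requirement $\lambda_1\!\left((f-\iota)\left[\sum_l\xi_l\mbf{K}_l(\mbf{X})\right]\right)\geq 0$ imposed in Proposition \ref{propKNN2}. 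With the remaining standing hypotheses of that proposition ($\sigma^2=\phi$, $\tilde\sigma_R^2\leq\tilde\tau\min_{1\le l\le L}\xi_l$, and $|f''|\leq M$ on the relevant interval) in force, Proposition \ref{propKNN2} immediately gives $PE_{KNN}\precsim PE_{LMM}$, which is the claimed $APEKNN\precsim APELMM$ once one notes $APEKNN$, $APELMM$ are just the average prediction errors $PE_{KNN}$, $PE_{LMM}$.

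For transparency I would also spell out the short self-contained argument, since it makes clear where the positive-semidefiniteness is used. By Lemma \ref{PredErrApprox}(ii) together with the conditioning computation already carried out in the proof of Proposition \ref{propKNN1}, one has, as $m\to\infty$, $PE_{KNN}\simeq\phi\,\mrm{tr}\!\left\{\left(\tilde\tau f[\mbf{\Sigma}_\xi]+\mbf{I}_n\right)^{-1}\right\}$, where $\mbf{\Sigma}_\xi:=\sum_{l=1}^L\xi_l\mbf{K}_l(\mbf{X})$. Writing $\mbf{\Sigma}:=\sum_{l=1}^L\mbf{K}_l(\mbf{X})$ for the LMM covariance of the comparison, decompose
\[
f[\mbf{\Sigma}_\xi]=\Big(\min_{1\le l\le L}\xi_l\Big)\mbf{\Sigma}+\underbrace{\Big(\mbf{\Sigma}_\xi-\big(\min_l\xi_l\big)\mbf{\Sigma}\Big)}_{\succeq\,\mbf{0}}+\underbrace{\big(f[\mbf{\Sigma}_\xi]-\mbf{\Sigma}_\xi\big)}_{\succeq\,\mbf{0}\text{ by hypothesis}},
\]
the first bracketed term being positive semidefinite because each $\mbf{K}_l(\mbf{X})\succeq\mbf{0}$ and $\xi_l\geq\min_l\xi_l$. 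Hence $f[\mbf{\Sigma}_\xi]\succeq(\min_l\xi_l)\mbf{\Sigma}$. Since $\mbf{A}\mapsto\mrm{tr}\!\left((\mbf{I}_n+c\mbf{A})^{-1}\right)$ is nonincreasing in $\mbf{A}$ in the Loewner order for $c>0$ (Weyl's monotonicity inequality, e.g. Corollary 4.3.15 in \citet{horn_johnson_2012}), we obtain $PE_{KNN}\precsim\phi\,\mrm{tr}\!\left\{\left(\tilde\tau\big(\min_l\xi_l\big)\mbf{\Sigma}+\mbf{I}_n\right)^{-1}\right\}=\phi\sum_{i=1}^n\big(\tilde\tau(\min_l\xi_l)\lambda_i(\mbf{\Sigma})+1\big)^{-1}$; then $\tilde\sigma_R^2\leq\tilde\tau\min_l\xi_l$ and $\sigma^2=\phi$ bound this by $\phi\sum_{i=1}^n\big(\tilde\sigma_R^2\lambda_i(\mbf{\Sigma})+1\big)^{-1}=PE_{LMM}$.

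There is essentially no obstacle: the entire substance sits in Proposition \ref{propKNN2} and the asymptotic reduction of Lemma \ref{PredErrApprox}(ii). The only two places needing a line of care are (a) justifying that $\mrm{tr}\!\left((\mbf{I}_n+c\,\cdot)^{-1}\right)$ is Loewner-monotone, which follows from monotonicity of each $\lambda_i$ under the Loewner order, and (b) being explicit that ``$\precsim$'' refers to the $m\to\infty$ regime inherited from Lemma \ref{PredErrApprox}, so that the chain of approximate equalities above is read as the intended asymptotic statement rather than exact identities.
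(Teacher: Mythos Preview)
Your proposal is correct and matches the paper's intent: the corollary is stated without proof immediately after Proposition~\ref{propKNN2}, so the paper treats it as an immediate consequence of that proposition via the equivalence ``positive semidefinite $\Leftrightarrow$ smallest eigenvalue nonnegative,'' exactly as you argue. Your additional self-contained argument is essentially a cleaned-up version of the proof of Proposition~\ref{propKNN2} itself (same decomposition and same appeal to Weyl/Horn--Johnson monotonicity), so there is no substantive difference in approach.
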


\begin{exmp}[Polynomial Kernels]
	For a polynomial kernel of degree $d$, i.e., $\mbf{K}_{ij}(\mbf{U})=\left(c+\frac{\mbf{w}_i^T\mbf{w}_j}{m}\right)^d$, we have $f(x)=(c+x)^d=\sum_{k=0}^d\binom{d}{k}c^{d-k}x^k$ so that
	$$
	(f-\iota)(x)=c^d+(dc^{d-1}-1)x+\sum_{k=2}^d\binom{d}{k}c^{d-k}x^k.
	$$
	Theorem 4.1 in \citet{hiai2009monotonicity} states that for a real function on $(-\alpha,\alpha)$, $0<\alpha\leq\infty$, it is Schur positive\footnote{For a real function $f$ on $(-\alpha,\alpha)$ and for $n\in\mbb{N}$, it is Schur-positive of order $n$ if $f[\mbf{A}]$ is positive semidefinite for all positive semidefinite $\mbf{A}\in\mcal{M}_n(\mbb{R})$ with entries in $(-\alpha,\alpha)$.} if and only if it is analytic and $f^{(k)}(0)\geq0$ for all $k\geq0$. Since $f-\iota$ is a polynomial function, it is clearly analytic. We can then expand $f(x)$ using Taylor expansion around 0 and obtain
	$$
	\binom{d}{k}c^{d-k}=\frac{f^{(k)}(0)}{k!}\Rightarrow f^{(k)}(0)=\frac{d!}{(d-k)!}c^{d-k},\quad k=0,\ldots,d.
	$$
	Hence, we have
	$$
	(f-\iota)^{(k)}(x)=\left\{\begin{array}{ll}
		dc^{d-1}-1 & \mrm{if }k=1\\
		\frac{d!}{(d-k)!}c^{d-k} & \mrm{if }k\in\{0,1,\ldots,d\}\setminus\{1\}\\
		0 & k\geq d+1
	\end{array}\right..
	$$
	To make $f-\iota$ Schur positive, we only need to require $c\geq\sqrt[d-1]{\frac{1}{d}}\geq0$ so that the minimum eigenvalue condition of Proposition \ref{propKNN2} holds.
\end{exmp}

\section{Including Fixed Effects}
In the previous discussions, we focus on the case where the marginal distribution of the response variable $\mbf{y}$ has mean $\mbf{0}$. In many applications, as the one we see in the ADNI real data application, there are many important covariates which may have large effects to the response variable. In this part, we are going to extend the proposed KNN model to take the covariates into account. As we have mentioned in section \ref{sec:meth}, the general structure of KNN is
\begin{align*}
	\mbf{y}|\mbf{Z},\mbf{a} & \sim\mcal{N}_n\left(\mbf{Z\beta}+\mbf{a},\phi\mbf{I}_n\right)\\
	\mbf{a}|\mbf{u}_1,\ldots,\mbf{u}_m & \sim\mcal{N}_n\left(\mbf{0},\sum_{j=1}^J\tau_j\mbf{K}_j(\mbf{U})\right)\numberthis\label{eq:KNNwithCovariates}\\
	\mbf{u}_1,\ldots,\mbf{u}_m & \sim\mrm{ i.i.d. }\mcal{N}_n\left(\mbf{0},\sum_{l=1}^L\xi_l\mbf{K}_l(\mbf{X})\right).
\end{align*}
Similar to the situation considered before, given the latent variables $\mbf{u}_1,\ldots,\mbf{u}_m$, we have
$$
\mbf{y}|\mbf{Z},\mbf{u}_1,\ldots,\mbf{u}_m\sim\mcal{N}_n\left(\mbf{Z\beta},\sum_{j=1}^J\tau_j\mbf{K}_j(\mbf{U})+\phi\mbf{I}_n\right)
$$
and then the marginal mean and covariance matrix of $\mbf{y}$ can be obtained as follow:
\begin{align*}
	\mbb{E}\left[\mbf{y}|\mbf{Z}\right] & =\mbb{E}\left[\mbb{E}\left[\mbf{y}|\mbf{Z},\mbf{u}_1,\ldots,\mbf{u}_m\right]\right]=\mbb{E}\left[\mbf{Z\beta}\right]=\mbf{Z\beta}\\
	\mrm{Var}\left[\mbf{y}|\mbf{Z}\right] & =\mrm{Var}\left[\mbb{E}\left[\mbf{y}|\mbf{Z},\mbf{u}_1,\ldots,\mbf{u}_m\right]\right]+\mbb{E}\left[\mrm{Var}\left[\mbb{E}\left[\mbf{y}|\mbf{Z},\mbf{u}_1,\ldots,\mbf{u}_m\right]\right]\right]\\
	& =\mrm{Var}[\mbf{Z\beta}]+\mbb{E}\left[\sum_{j=1}^J\tau_j\mbf{K}_j(\mbf{U})+\phi\mbf{I}_n\right]\\
	& =\sum_{j=0}^J\tau_j\mbb{E}\left[\mbf{K}_j(\mbf{U})\right],
\end{align*}
where $\tau_0=\phi$ and $\mbf{K}_0(\mbf{U})=\mbf{I}_n$. Again, we focus on the case where $J=1$ and $\mbf{K}(\mbf{U})$ is of the form $f\left[\frac{1}{m}\mbf{UU}^T\right]$ with $f$ satisfying the generalized linear separable condition so that after suitable reparameterization, the variance components become estimable. A natural way to obtain the estimators for $\mbf{\beta}$ and the variance components $\mbf{\theta}$ is to first obtain a ``good" estimates for the variance components $\hat{\mbf{\theta}}$ and then plug it into the Aitken equation to obtain the estimates for fixed-effect parameters $\hat{\mbf{\beta}}$. \citet{kackar1981unbiasedness} showed that as long as $\hat{\mbf{\theta}}$ is even and translation-invariant, $\mbf{p}^T\hat{\mbf{\beta}}+\mbf{q}^T\hat{\mbf{a}}$ is an unbiased prediction for the quantity $\mbf{p}^T\mbf{\beta}+\mbf{q}^T\mbf{a}$.

Let $\mbf{R}$ be an $r\times n$ matrix with $r=n-\mrm{rank}(\mbf{Z})$ such that $\mbf{RZ}=\mbf{O}$ and $\mbf{RR}^T=\mbf{I}_r$. Such a matrix can be found using the QR decomposition of $\mbf{Z}$ \citep{glmms}. The estimators for the variance components can then be estimated based on the transformed model:
\begin{align*}
	\tilde{\mbf{y}}|\tilde{\mbf{a}} & \sim\mcal{N}_r(\tilde{\mbf{a}},\phi\mbf{I}_r)\\
	\tilde{\mbf{a}}|\mbf{u}_1,\ldots,\mbf{u}_m & \sim\mcal{N}_r\left(\mbf{0},\sum_{j=1}^J\tau_j\mbf{R}\mbf{K}_j(\mbf{U})\mbf{R}^T\right)\\
	\mbf{u}_1,\ldots,\mbf{u}_m & \sim\mrm{ i.i.d. }\mcal{N}_n\left(\mbf{0},\sum_{l=1}^L\xi_l\mbf{K}_l(\mbf{X})\right),
\end{align*}
where $\tilde{\mbf{y}}=\mbf{Ry}$ and $\tilde{\mbf{a}}=\mbf{Ra}$, which is the same model framework we mainly discussed in section \ref{sec:Methodologies}. As we have seen, the estimators for the variance components after reparameterization is of the form $\tilde{\mbf{y}}^T\mbf{\Theta}\tilde{\mbf{y}}=\mbf{y}^T\mbf{R^T\Theta Ry}$. Since it is a quadratic form, clearly it is an even function in $\mbf{y}$. To see that it is translation invariant, we note that for any $\mbf{c}\in\mcal{C}(\mbf{Z})$, we can know that $\mbf{c}=\mbf{Zb}$ for some vector $\mbf{b}\in\mbb{R}^r$ and we have
\begin{align*}
	(\mbf{y}-\mbf{\gamma})^T\mbf{R}^T\mbf{\Theta R}(\mbf{y}-\mbf{\gamma}) & =(\mbf{y}-\mbf{Zb})^T\mbf{R}^T\mbf{\Theta R}(\mbf{y}-\mbf{Zb})\\
	& =\mbf{y}^T\mbf{R}^T\mbf{\Theta Ry}-2\mbf{y}^T\mbf{R}^T\mbf{\Theta RZb}+\mbf{b}^T\mbf{Z}^T\mbf{R}^T\mbf{\Theta RZb}\\
	& =\mbf{y}^T\mbf{R}^T\mbf{\Theta Ry},
\end{align*}
where the last equality follows since $\mbf{RZ}=\mbf{O}$ as defined. Therefore, we can know that the obtained estimators for variance components are also translation-invariant and based on the results in \citet{kackar1981unbiasedness}, the obtained estimator for $\mbf{\beta}$ by plugging in the variance component estimators is unbiased.

For the prediction error, we note that when the covariates present, the predictor for $\mbf{y}$ is given by $\hat{\mbf{y}}=\mbf{Z}\hat{\mbf{\beta}}+\hat{\mbf{a}}=\mbf{Z}\hat{\mbf{\beta}}+\mbb{E}[\mbf{a}|\mbf{y}]$. Based on the result from linear mixed models, we know that
\begin{align*}
	\hat{\mbf{a}} & =\mbb{E}\left[\left(\sum_{j=1}^J\tilde{\tau}_j\mbf{K}_j(\mbf{U})\right)\left(\sum_{j=1}^J\tilde{\tau}_j\mbf{K}_j(\mbf{U})+\mbf{I}_n\right)^{-1}\right]\left(\mbf{y}-\mbf{Z}\mbf{\beta}\right)\\
	& =\mbb{E}\left[\left(\sum_{j=1}^J\tilde{\tau}_j\mbf{K}_j(\mbf{U})\right)\left(\sum_{j=1}^J\tilde{\tau}_j\mbf{K}_j(\mbf{U})+\mbf{I}_n\right)^{-1}\right]\left[\mbf{I}_n-\mbf{Z}\left(\mbf{Z}^T\mbf{V}^{-1}\mbf{Z}\right)^-\mbf{Z}^T\mbf{V}^{-1}\right]\mbf{y}
\end{align*}
where $\mbf{V}=\mrm{Var}[\mbf{y}]=\sum_{j=1}^J\tau_j\mbb{E}[\mbf{K}_j(\mbf{U})]+\phi\mbf{I}_n$ and then
\begin{align*}
	\mbf{y}-\hat{\mbf{y}} & =\left[\mbf{I}_n-\mbf{Z}\left(\mbf{Z}^T\mbf{V}^{-1}\mbf{Z}\right)^-\mbf{Z}^T\mbf{V}^{-1}\right]\mbf{y}\\
	& \hspace{1cm}-\mbb{E}\left[\left(\sum_{j=1}^J\tilde{\tau}_j\mbf{K}_j(\mbf{U})\right)\left(\sum_{j=1}^J\tilde{\tau}_j\mbf{K}_j(\mbf{U})+\mbf{I}_n\right)^{-1}\right]\left[\mbf{I}_n-\mbf{Z}\left(\mbf{Z}^T\mbf{V}^{-1}\mbf{Z}\right)^-\mbf{Z}^T\mbf{V}^{-1}\right]\mbf{y}\\
	& =\left(\mbf{I}_n-\mbb{E}\left[\left(\sum_{j=1}^J\tilde{\tau}_j\mbf{K}_j(\mbf{U})\right)\left(\sum_{j=1}^J\tilde{\tau}_j\mbf{K}_j(\mbf{U})+\mbf{I}_n\right)^{-1}\right]\right)\left[\mbf{I}_n-\mbf{Z}\left(\mbf{Z}^T\mbf{V}^{-1}\mbf{Z}\right)^-\mbf{Z}^T\mbf{V}^{-1}\right]\mbf{y}\\
	& =\mbb{E}\left[\left(\sum_{j=1}^J\tilde{\tau}_j\mbf{K}_j(\mbf{U})+\mbf{I}_n\right)^{-1}\right]\left[\mbf{I}_n-\mbf{Z}\left(\mbf{Z}^T\mbf{V}^{-1}\mbf{Z}\right)^-\mbf{Z}^T\mbf{V}^{-1}\right]\mbf{y},
\end{align*}
where the last equality follows by noting that 
$$
\mbf{I}_n-\mbb{E}\left[\left(\sum_{j=1}^J\tilde{\tau}_j\mbf{K}_j(\mbf{U})\right)\left(\sum_{j=1}^J\tilde{\tau}_j\mbf{K}_j(\mbf{U})+\mbf{I}_n\right)^{-1}\right]=\mbb{E}\left[\left(\sum_{j=1}^J\tilde{\tau}_j\mbf{K}_j(\mbf{U})+\mbf{I}_n\right)^{-1}\right]
$$ 
as shown above. Therefore, by letting $\mbf{A}=\mbb{E}\left[\left(\sum_{j=1}^J\tilde{\tau}_j\mbf{K}_j(\mbf{U})+\mbf{I}_n\right)^{-1}\right]$ and $\mbf{P}_{\mbf{V}}=\mbf{Z}\left(\mbf{Z}^T\mbf{V}^{-1}\mbf{Z}\right)^-\mbf{Z}^T\mbf{V}^{-1}$, the prediction error is obtained as follow:
\begin{align*}
	(\mbf{y}-\hat{\mbf{y}})^T(\mbf{y}-\hat{\mbf{y}}) & =\mbf{y}^T\left(\mbf{I}_n-\mbf{P}_{\mbf{V}}^T\right)\hat{\mbf{A}}^2\left(\mbf{I}_n-\mbf{P}_{\mbf{V}}\right)\mbf{y}.
\end{align*}
When $J=1$, as we have shown in the proof of Lemma \ref{PredErrApprox}, $\mbb{E}\left[\left(\tilde{\tau}\mbf{K}(\mbf{U})+\mbf{I}_n\right)^{-1}\right]\simeq$\\$\left(\tilde{\tau}f\left[\sum_{l=1}^L\xi_l\mbf{K}_l(\mbf{X})\right]+\mbf{I}_n\right)^{-1}$ so that we can estimate the prediction error by using simple plug-in estimators.
\begin{align*}
	\widehat{(\mbf{y}-\hat{\mbf{y}})^T(\mbf{y}-\hat{\mbf{y}})} & \simeq\mbf{y}^T\left(\mbf{I}_n-\mbf{P}_{\hat{\mbf{V}}}^T\right)\left(\hat{\tilde{\tau}}f\left[\sum_{l=1}^L\hat{\xi}_l\mbf{K}_l(\mbf{X})\right]+\mbf{I}_n\right)^{-2}\left(\mbf{I}_n-\mbf{P}_{\hat{\mbf{V}}}\right)\mbf{y}.
\end{align*}

\section{Simulations}
\label{sec:sim}
In this section, we conducted some simulations to compare the prediction performance of KNN with MINQUE estimation to the prediction performance of Best Linear Unbiased Estimator (BLUP) in linear mixed model. All the simulations are based on 100 individuals with 500 Monte Carlo iterations.

\subsection{Nonlinear Random Effect}
In this simulation, we examine the performances of both methods under the situation of nonlinear random effects. Specifically, we used the following model to simulate the response:
\begin{equation}\label{simNonlinear}
	\mbf{y}=\mbf{1}_n+2\mbf{\zeta}+f(\mbf{a})+\mbf{\epsilon},\quad\mbf{a}\sim\mcal{N}_n\left(\mbf{0},\frac{1}{p}\mbf{GG}^T\right),
\end{equation}
where $\mbf{G}$ is an $n\times p$ matrix containing the genetic information (SNP) and $\mbf{\zeta},\mbf{\epsilon}\sim\mrm{ i.i.d. }\mcal{N}_n(\mbf{0},\mbf{I}_n)$. In the simulation, four types of functions $f$ are considered, which are linear ($f(x)=x$), sine ($f(x)=\sin(2\pi x)$), inverse logistic ($f(x)=1/(1+e^{-x})$) and polynomial function of order 2 ($f(x)=x^2$). When applying the kernel neural network, we set $L=J=1$ and choose $\mbf{K}(\mbf{X})$ and $\mbf{K}(\mbf{U})$ as either product kernel or polynomial kernel of order 2. We summarize the prediction errors of LMM and KNN via boxplot in Figure \ref{nonlinear1}. Figure \ref{nonlinear1} shows the results when $f$ is a linear or a sine function. The cases where $f$ is an inverse logistic function or a polynomial function of order 2 are summarized in Appendix C. As we can observe from the boxplots, when the output kernel is chosen to be the product kernel, the performance of KNN is similar to the performance of LMM although when the input kernel is chosen to be polynomial kernel, KNN gets a slightly better prediction error, which we think is not significantly different from that of LMM. However, KNN performs significantly better than LMM when the output kernel is chosen to be polynomial kernel. As one can tell from the box plots, when both the input and output kernels are chosen to be polynomial, the KNN has the best performance in terms of the prediction error, which is consistent for all nonlinear functions simulated in this section.

\begin{figure}[htbp]
	\centering
	\includegraphics[width=0.45\textwidth, height=0.45\textheight]{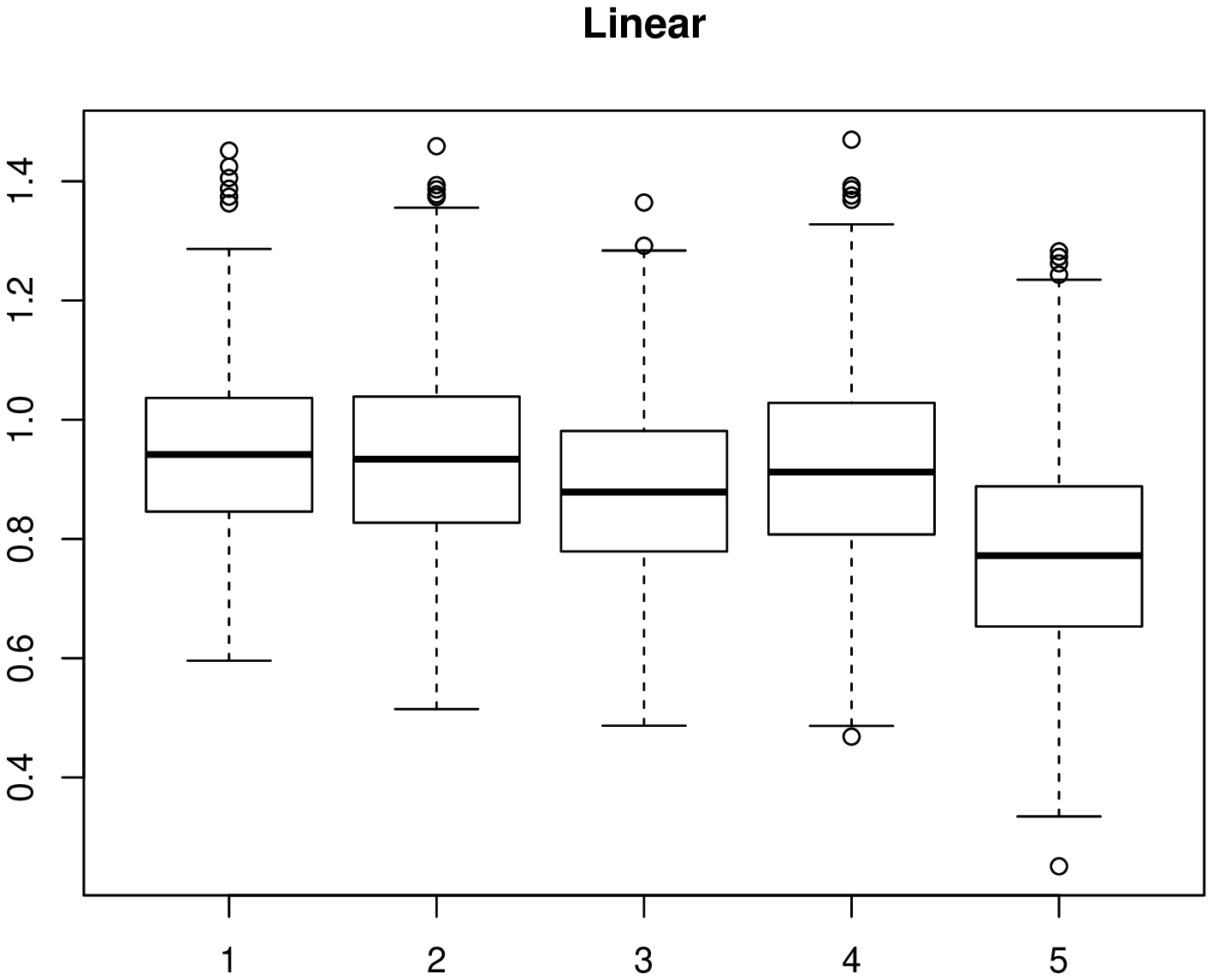}
	\includegraphics[width=0.45\textwidth, height=0.45\textheight]{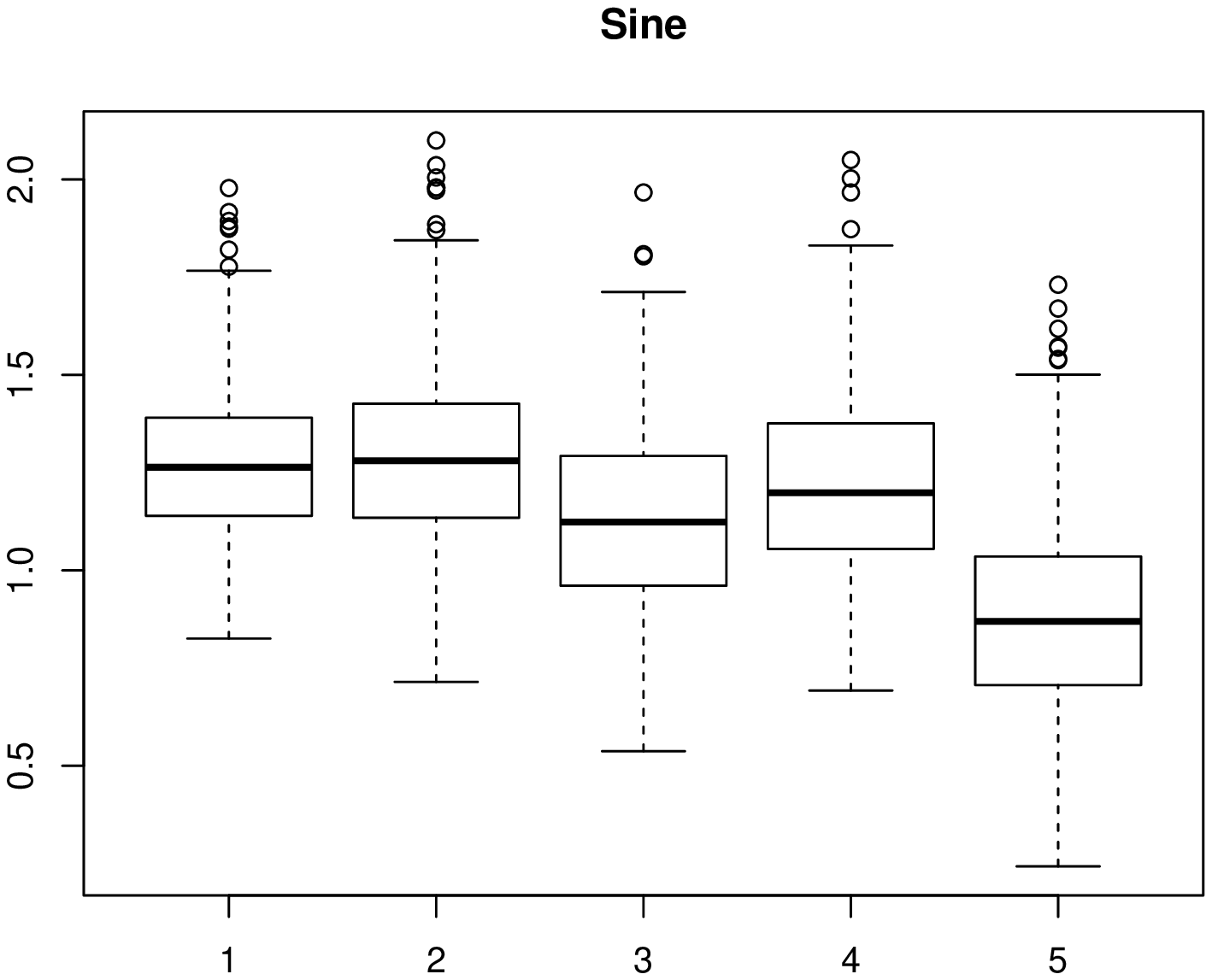}
	\caption{The boxplots for linear mixed models (LMM) and kernel neural networks (KNN) in terms of prediction errors. The left panel shows the results when a linear function is used and the right panel shows the results when a sine function is used. In the horizontal axis, ``1" corresponds to the LMM; ``2" corresponds to the KNN with product input kernel and product output kernel; ``3" corresponds to the KNN with product input and polynomial output; ``4" corresponds to the KNN with polynomial input and product output and ``5" corresponds to the polynomial input and polynomial output.}\label{nonlinear1}
\end{figure}

\subsection{Nonadditive Effects}
In this simulation, we explore the performances of both method under non-additive effects. We conducted two simulations in terms of two different types of non-additive effects. In the first simulation, we mainly focus on the interaction effect and generate the response using the following model:
$$
\mbf{y}=f(\mbf{G})+\mbf{\epsilon},
$$
where $\mbf{G}=[\mbf{g}_1,\ldots,\mbf{g}_p]\in\mbb{R}^{n\times p}$ is the SNP data and $\mbf{\epsilon}\sim\mcal{N}_n(\mbf{0},\mbf{I}_n)$. When applying both methods, the mean is adjusted so that the response has marginal mean 0. In the simulation, we randomly pick 10 causal SNPs, denoted by $\mbf{g}_{i_1},\ldots,\mbf{g}_{i_{10}}$ and consider  the following function,
$$
f(\mbf{G})=\sum_{1\leq j_1<j_2\leq 10}\mbf{g}_{i_{j_1}}\odot\mbf{g}_{i_{j_2}},
$$
where $\odot$ stands for the Hadamard product. For LMM, the product kernel was used as the covariance matrix to generate the random effect. The result is shown in Figure \ref{NAIN}. It is interesting to notice from the boxplots that both LMM and KNN have many outliers when product kernel was used. Overall, LMM has larger variations compared to KNN in this simulation. When the output kernel in KNN is the polynomial kernel, the performance of KNN is much better than that of LMM.

\begin{figure}[htbp]
	\centering
	\includegraphics[height=0.6\textheight, width=\textwidth]{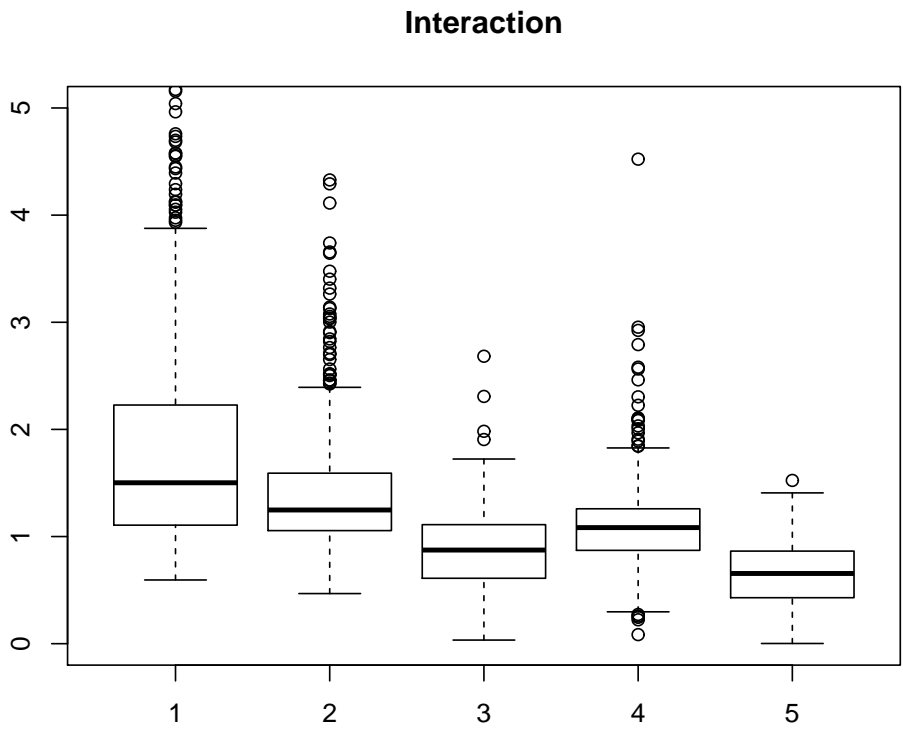}
	\caption{The boxplots for linear mixed models (LMM) and kernel neural networks (KNN) in terms of prediction errors based on the simulation model focusing on the interaction effect. The vertical axis is scaled to 0-5 by removing some outliers to make the comparison visually clear. In the horizontal axis, ``1" corresponds to the LMM; ``2" corresponds to the KNN with product input kernel and product output kernel; ``3" corresponds to the KNN with product input and polynomial output; ``4" corresponds to the KNN with polynomial input and product output and ``5" corresponds to the polynomial input and polynomial output.}\label{NAIN}
\end{figure}

There are three main modes of inheritance: additive, dominant and recessive. In many situations, the additive coding (AA=0, Aa=1, aa=2) is used. In the second simulation, we consider the dominant coding (AA=1, Aa=1, aa=0) and the recessive coding (AA=0, Aa=1, aa=1). The response was simulated based on the model:
$$
\mbf{y}=\mbf{a}+\mbf{\epsilon},\quad\mbf{a}\sim\mcal{N}_n\left(\mbf{0},\frac{1}{p}\mbf{G}'\mbf{G}'^T\right),\quad\mbf{\epsilon}\sim\mcal{N}_n(\mbf{0},\mbf{I}_n),
$$
where $\mbf{G}'$ is a SNP data matrix based on dominant coding or recessive coding so that each element in $\mbf{G}'$ takes only two possible values 0 and 1. Figure \ref{NADCRC} summarizes the simulation results. By comparing the two boxplots in Figure \ref{NADCRC}, the performances look similar in both cases. Similar as before, KNN with input polynomial kernel and output polynomial kernel achieves the lowest prediction error.

\begin{figure}[htbp]
	\centering
	\includegraphics[width=0.45\textwidth]{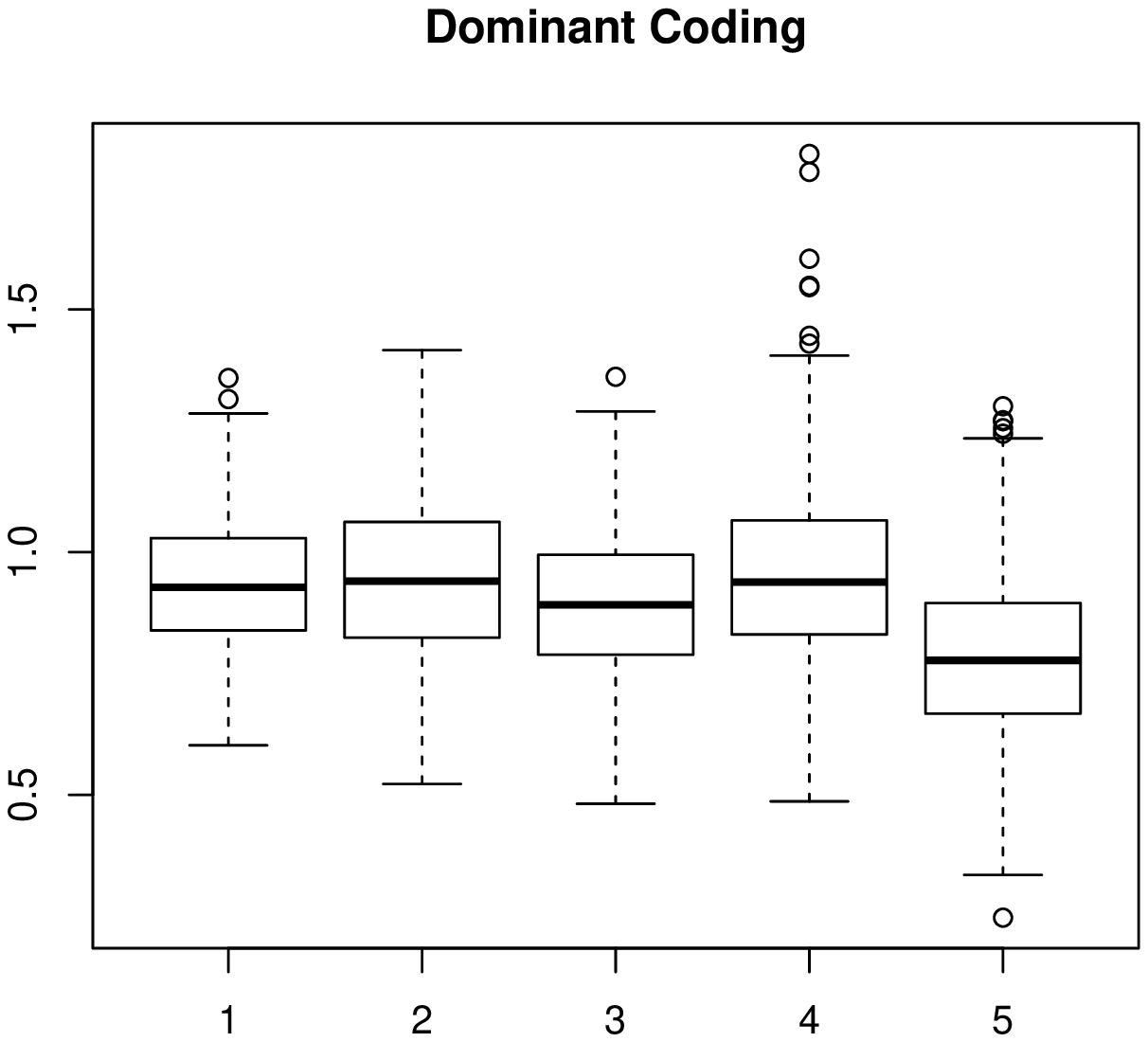}
	\includegraphics[width=0.45\textwidth]{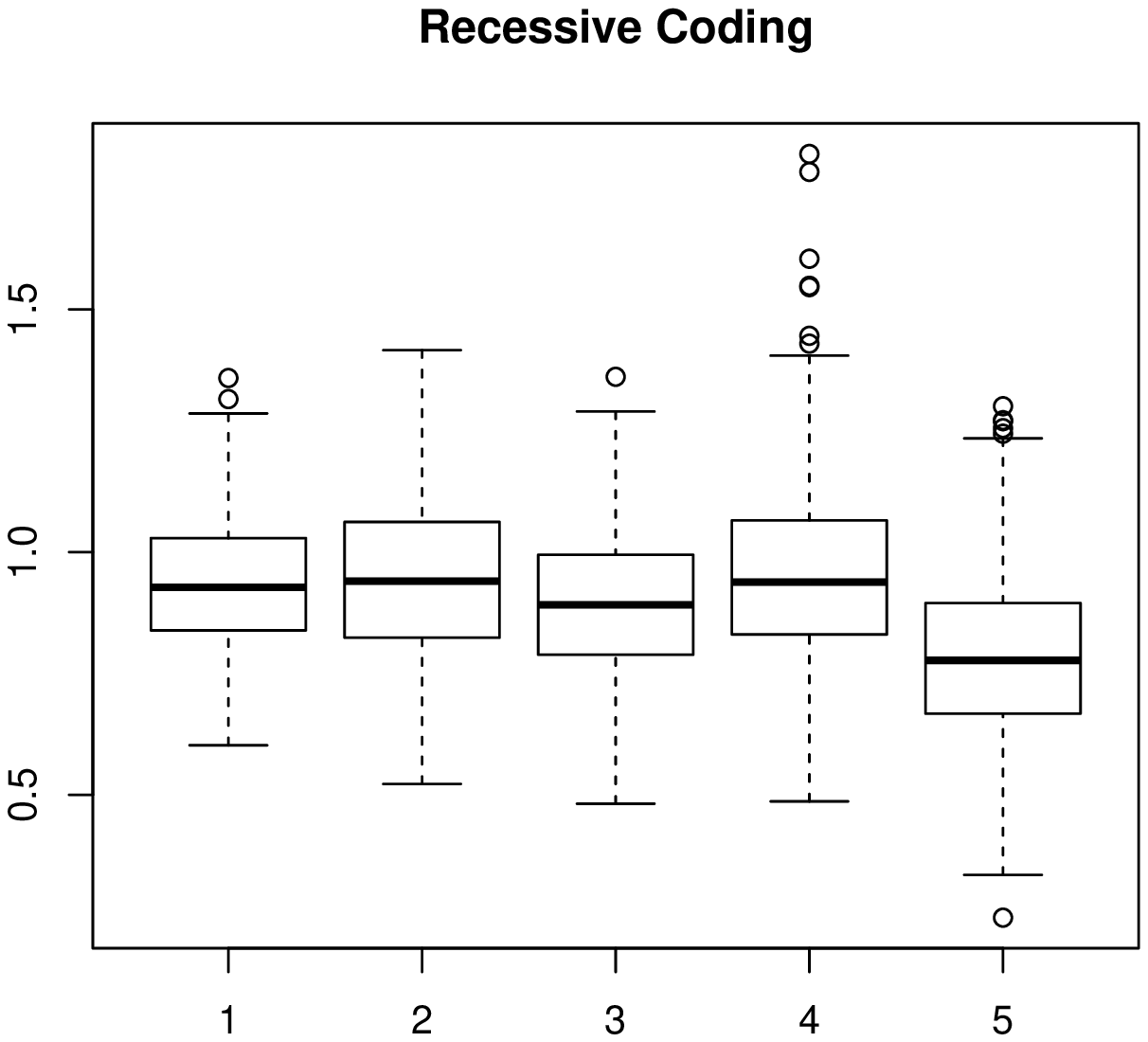}
	\caption{The boxplots for linear mixed models (LMM) and kernel neural networks (KNN) in terms of prediction errors based on the simulation model using dominant coding (left figure) and recessive coding (right figure) for SNPs. In the horizontal axis, ``1" corresponds to the LMM; ``2" corresponds to the KNN with product input kernel and product output kernel; ``3" corresponds to the KNN with product input and polynomial output; ``4" corresponds to the KNN with polynomial input and product output and ``5" corresponds to the polynomial input and polynomial output.}\label{NADCRC}
\end{figure}

\subsection{Non-normal Error Distributions}
In this simulation, we consider different types of error distributions and explore the performance in terms of prediction error between LMM and KNN. Specifically, we focus on two types of error distributions. The first one is a $t$-distribution with degrees of freedom 2, which is a heavy-tailed distribution and the second one is a centered $\chi_1^2$-distribution, which is a non-symmetric distribution. The response was simulated based on the model as in section 5.1 except that the function applied to the random effect is $f(x)=x$ and the distribution for the error term $\epsilon$ is either $t_2$ or centered $\chi_1^2$ distribution. The results are summarized in Figure \ref{nonNormal}. From the results, we can know that the KNN with polynomial input and output kernel is slightly more robust to non-normal distributions compared with LMM and KNN with other combinations of input and output kernels considered in the simulations. Since $t_2$-distribution is a heavy-tailed distribution, we can find that there are more outliers than in the normal case. In terms of the number of outliers, KNN with polynomial input and output kernel still performs the best.

\begin{figure}[htbp]
	\centering
	\includegraphics[width=0.45\textwidth]{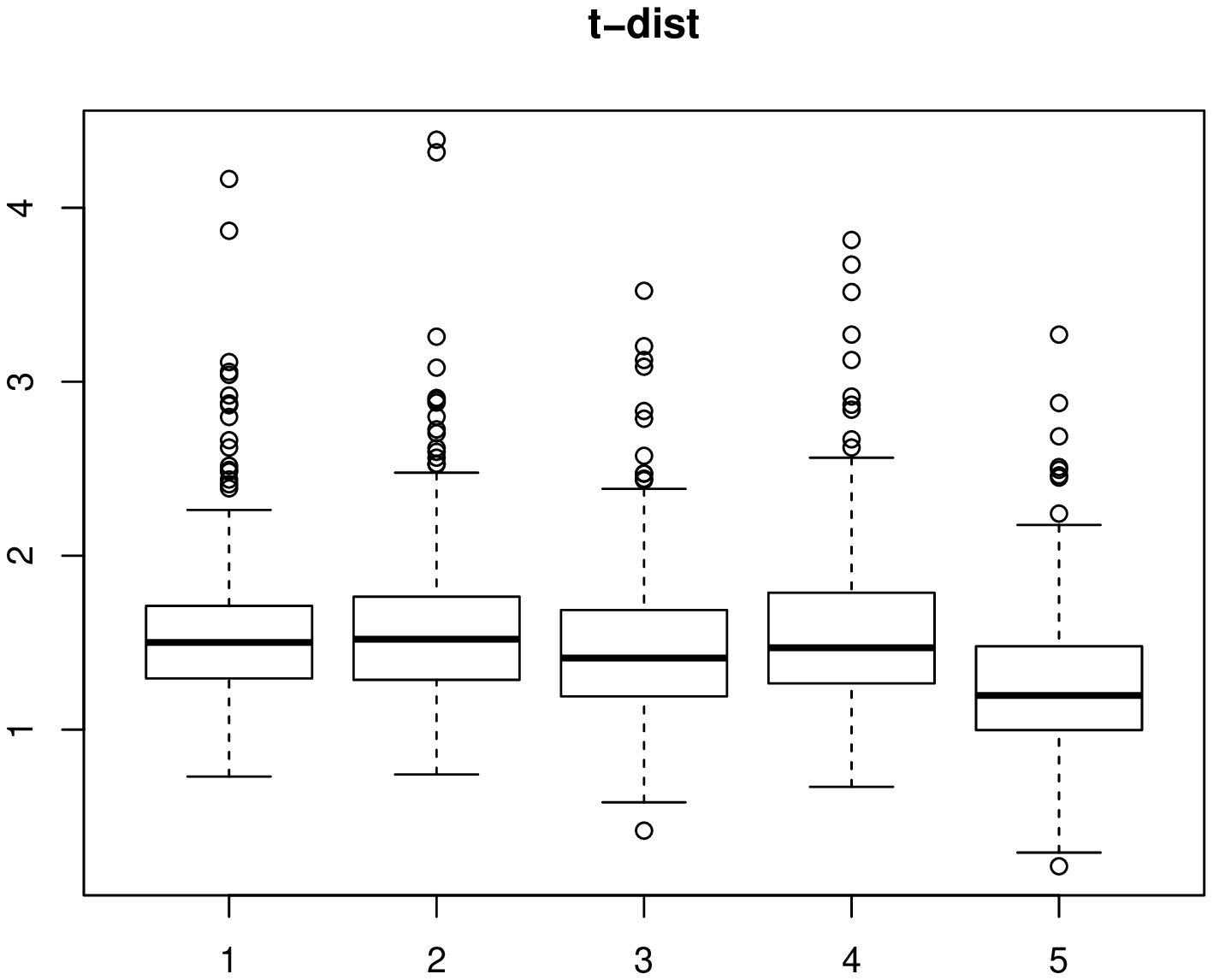}
	\includegraphics[width=0.45\textwidth]{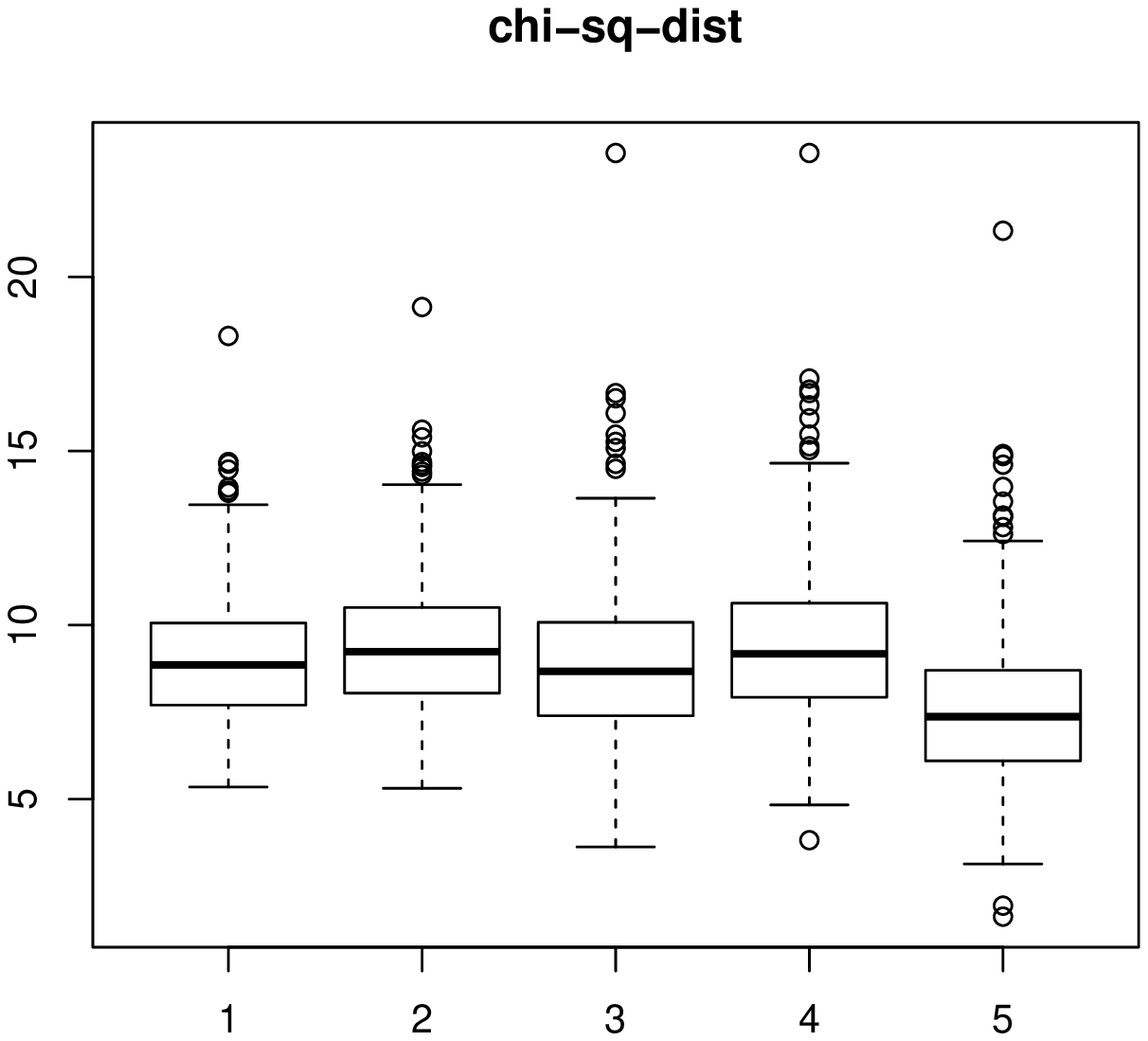}
	\caption{The boxplots for linear mixed models (LMM) and kernel neural networks (KNN) in terms of prediction errors based on the simulation model using $t$-distribution for error (left figure) and centered $\chi_1^2$-distribution for error (right figure). In the horizontal axis, ``1" corresponds to the LMM; ``2" corresponds to the KNN with product input kernel and product output kernel; ``3" corresponds to the KNN with product input and polynomial output; ``4" corresponds to the KNN with polynomial input and product output and ``5" corresponds to the polynomial input and polynomial output.}\label{nonNormal}
\end{figure}

\section{Real Data Application}
\label{sec:realData}
We applied our method to the whole genome sequencing data from Alzheimer's Disease Neuroimaging Initiative (ADNI) as well and made predictions on the responses. A total of 808 individuals at the baseline of the ADNI1 and ADNI2 studies have the whole genome sequencing data. We dropped the single nucleotide polymorphisms (SNPs) with low calling rate ($<$0.9), or low minor allele frequencies (MAF) ($<$0.01), or those failed to pass the Hardy Weinberg exact test ($p$-value$<$1e-6), and non-European American samples were also dropped. The data was then uploaded to the server in the University of Michigan for posterior likelihood imputation (\url{https://imputationserver.sph.umich.edu/index.html}). From the imputed data, we extracted SNPs with allelic $R^2>0.9$ and then the covariance kernel matrix the normalized identity-by-state (IBS) kernel matrix were constructed for analysis. Specifically, the $(i,j)$the element in each of the two kernel matrix is defined as follows:
\begin{align*}
	\mbf{K}^{\mrm{cov}}(\mbf{g}_i,\mbf{g}_j) & =\frac{1}{p-1}\sum_{k=1}^p\left(g_{ik}-\frac{1}{p}\sum_kg_{ik}\right)\left(g_{jk}-\frac{1}{p}\sum_kg_{jk}\right)\\
	\mbf{K}^{\mrm{ibs}}(\mbf{g}_i,\mbf{g}_j) & =\frac{1}{2p}\sum_{k=1}^p\left[2-|g_{ik}-g_{jk}|\right],
\end{align*}
where $p$ is the number of SNPs in all expressions.

Four volume measures of cortical regions, which are hippocampus, ventricles, entorhinal and whole brain volumes were used as phenotypes of interest. We chose these four cortical regions since they play important roles in the Alzheimer's disease (AD). The loss in the volumes of the whole brain, hippocampus and entorhinal and the increment in the ventricular volume can be detected among AD patients. When we applied both the KNN method and LMM method, we only include the subjects having both genetic information and phenotypic information, which results in 513 individuals for hippocampus; 564 individuals for ventricles; 516 individuals for entorhinal and 570 individuals for whole brain volumes. 

The response variable was chosen to be the natural logarithm of the volumes of the four cortical regions and the covariates were chosen as the age, gender, education status and \textit{APOE4}. Then the KNN is based on the model
\begin{align*}
	\mbf{y}|\mbf{u}_1,\ldots,\mbf{u}_m & \sim\mcal{N}_n\left(\mbf{Z\beta},\tau\mbf{K}(\mbf{U})+\phi\mbf{I}_n\right)\\
	\mbf{u}_1,\ldots,\mbf{u}_m & \sim\mcal{N}_n\left(\mbf{0},\xi_1\mbf{K}^{\mrm{cov}}+\xi_2\mbf{K}^{\mrm{ibs}}\right),
\end{align*}
and the LMM is based on the following model assumption:
\begin{equation}\label{adniModel}
	\mbf{y}\sim\mcal{N}_n\left(\mbf{Z\beta},\tau_1\mbf{K}^{\mrm{cov}}+\tau_2\mbf{K}^{\mrm{ibs}}+\tau_3\mbf{I}_n\right).
\end{equation}
The restricted maximum likelihood estimates for $\tau_i$, $i=1,2,3$ were then calculated based on the Fisher scoring methods and the BLUP for the LMM were computed based on these estimators. Similarly, when we applied the KNN methods, the two kernel matrices $\mbf{K}^{\mrm{cov}}$ and $\mbf{K}^{\mrm{ibs}}$ were used as the input kernel matrices and the output kernel matrix is chosen to be either the product kernel or the polynomial kernel of order 2. The average mean square errors on predictors of both methods were summarized in the Table \ref{tab:mseADNI}.

\begin{table}[htbp]
	\caption{Average mean squared prediction error of KNN with product output kernel matrix, KNN with output kernel matrix as polynomial of order 2 and the BLUP based on LMM.  \label{tab:mseADNI}}
	\begin{center}
		\begin{tabular}{c|ccc}
			\hline
			& KNN(prod) & KNN(poly) & LMM\\
			\hline
			Hippocampus & 2.01e-06 & 4.17e-05 & 2.11e-02 \\
			Ventricles & 2.44e-03 & 1.98e-03 & 2.24e-01 \\
			Entorhinal & 1.01e-05 & 1.06e-04 & 4.05e-02 \\
			Whole Brain Volume & 1.25e-07 & 3.04e-08 & 3.68e-11 \\
			\hline
		\end{tabular}
	\end{center}
\end{table}

As we can see from the table, both KNN and LMM have good prediction errors. However, we would say that the prediction errors from KNN are more realistic. It is interesting to note that the average prediction errors of LMM for entorhinal and whole brain volume are extremely close to zero. We checked the estimates of the variance components in this case and noticed that the variance component associated with the identity matrix, which is $\tau_3$ as in (\ref{adniModel}). Since the BLUP of the random effect in this case is given by
\begin{align*}
	BLUP & =\mbf{Z}\hat{\mbf{\beta}}+\left(\tau_1\mbf{K}^{\mrm{cov}}+\tau_2\mbf{K}^{\mrm{ibs}}\right)\left(\tau_1\mbf{K}^{\mrm{cov}}+\tau_2\mbf{K}^{\mrm{ibs}}+\tau_3\mbf{I}_n\right)^{-1}(\mbf{y}-\mbf{Z}\hat{\mbf{\beta}})
\end{align*}
so that if $\tau_3$ is very close to 0, we can know that the BLUP is very close to the response $\mbf{y}$, which leads to the extremely low prediction error. On the other hand, for KNN, since when the estimate of error variance becomes negative, we project the MINQUE matrix onto the positive semidefinite cone so that we will always get a positive estimate for the error variance component, which makes the calculation of prediction error more reasonable.

\section{Discussion}
\label{sec:discussion}
In this paper, a kernel-based neural network model was proposed for prediction analyses. The kernel-based neural network can be thought of as an extension of linear mixed model since it can reduce to a linear mixed model through choosing product kernel matrix as the output kernel matrix and via reparameterization. A modified MINQUE strategy is used to obtain the estimators of variance components in the KNN model if the output kernel satisfies the generalized linear separable condition. Empirical simulation studies and real data application show that the KNN model can achieve better performance in terms of the mean squared prediction error when the output kernel matrix is chosen as the polynomial kernel matrix. This is analogous to the popular neural network model, where better prediction accuracy can be achieved when nonlinear activation functions are applied. 

Many extensions can be made to make the KNN model more flexible and have much broader applications. First, it may be possible to consider conducting base kernel matrix selection. Although in this paper, we do not consider how to choose the number of base kernel matrix $L$, but too many kernel matrices will certainly increase the amount of redundant information. So it may be beneficial to propose a criterion on choosing the base kernel matrices. An other possible extension of the KNN model is more challenging. The theoretical properties discussed in this paper mainly focus on the case where only one output kernel matrix is used and the kernel function has a specific form. It is natural to consider more general kernel functions, but the estimation procedure of the variance component would be more complex. Moreover, it is also advisable to consider the performance of KNN if deep network structures were applied.

\newpage
\begin{appendices}
	\section{Some Results from Concentration Inequality and Matrix Analysis}
	\subsection{Sub-Gaussian and Sub-Exponential Inequalities}
	In this part, we present two basic concentration inequalities used in the main text. For more details on concentration inequality, readers may refer to \citet{buldygin2000metric}, \citet{boucheron2004concentration} and  \citet{ledoux2005concentration}.
	
	\begin{mydef}
		\begin{enumerate}[(i)]
			\item A random variable $X$ with mean $\mu=\mbb{E}[X]$ is called sub-Gaussian if there exists a number $\sigma\geq0$ such that
			$$
			\mbb{E}\left[e^{\lambda(X-\mu)}\right]\leq\exp\left\{\frac{1}{2}\lambda^2\sigma^2\right\},\quad\forall\lambda\in\mbb{R}
			$$
			The constant $\sigma$ is known as the sub-Gaussian parameter.
			
			\item A random variable $X$ with mean $\mu=\mbb{E}[X]$ is called sub-exponential (also called pre-Gaussian) if there exist non-negative constants $(\beta,\alpha)$ such that
			$$
			\mbb{E}\left[e^{\lambda(X-\mu)}\right]\leq\exp\left\{\frac{1}{2}\lambda^2\beta^2\right\},\quad\forall|\lambda|<\frac{1}{\alpha}.
			$$
			The pair of constants $(\beta,\alpha)$ is known as the sub-exponential parameter.
		\end{enumerate}
	\end{mydef}
	
	\noindent
	Based on the definition of sub-Gaussian random variables, it is clear that if $X\sim\mcal{N}(\mu,\sigma^2)$, then $X$ is sub-Gaussian with sub-Gaussian parameter $\sigma$. The tail probabilities of both sub-Gaussian and sub-exponential random variables can be bounded exponentially. For sub-Gaussian random variables, the result is the famous Hoeffding inequality.
	
	\begin{theorem}\citep{hoeffding1963probability}\label{HoeffdingIneq}
		Suppose that the random variables $X_1,\ldots,X_n$ are independent and $X_i$ has mean $\mu_i$ and sub-Gaussian parameter $\sigma_i$. Then for all $t>0$,
		$$
		\mbb{P}\left(\left|\sum_{i=1}^n(X_i-\mu_i)\right|>t\right)\leq2\exp\left\{-\frac{t^2}{2\sum_{i=1}^n\sigma_i^2}\right\}.
		$$
	\end{theorem}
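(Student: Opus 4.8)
The plan is to run the standard Chernoff-bound (exponential Markov) argument. First I would treat the upper tail. Fix $\lambda>0$ and apply Markov's inequality to the nonnegative random variable $e^{\lambda\sum_{i=1}^n(X_i-\mu_i)}$, which gives
$$
\mbb{P}\left(\sum_{i=1}^n(X_i-\mu_i)>t\right)\leq e^{-\lambda t}\,\mbb{E}\left[e^{\lambda\sum_{i=1}^n(X_i-\mu_i)}\right].
$$
Because $X_1,\ldots,X_n$ are independent, the expectation factorizes as $\prod_{i=1}^n\mbb{E}\left[e^{\lambda(X_i-\mu_i)}\right]$, and the sub-Gaussian assumption bounds the $i$th factor by $e^{\lambda^2\sigma_i^2/2}$. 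Combining, the right-hand side is at most $\exp\left\{-\lambda t+\tfrac12\lambda^2\sum_{i=1}^n\sigma_i^2\right\}$.

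Next I would minimize the exponent over $\lambda>0$. Writing $s=\sum_{i=1}^n\sigma_i^2$, the parabola $\lambda\mapsto-\lambda t+\tfrac12\lambda^2 s$ attains its minimum at $\lambda^\star=t/s$, which is strictly positive since $t>0$, and the minimum value is $-t^2/(2s)$. Hence $\mbb{P}\left(\sum_{i=1}^n(X_i-\mu_i)>t\right)\leq\exp\left\{-t^2/(2s)\right\}$. If $s=0$ then every $X_i$ equals $\mu_i$ almost surely, so the left side is $0$ and the inequality holds trivially; in all remaining cases $s>0$.

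Finally, to pass from the one-sided tail to the absolute value I would apply the identical argument to the variables $-X_i$. Each $-X_i$ has mean $-\mu_i$, and directly from the definition its moment generating function satisfies $\mbb{E}\left[e^{-\lambda(X_i-\mu_i)}\right]\leq e^{\lambda^2\sigma_i^2/2}$ (the bound in the definition is an even function of $\lambda$), so $-X_i$ is sub-Gaussian with the same parameter $\sigma_i$. Thus $\mbb{P}\left(\sum_{i=1}^n(X_i-\mu_i)<-t\right)$ obeys the same bound $\exp\{-t^2/(2s)\}$, and a union bound over the two one-sided events produces the factor of $2$ in the statement.

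There is no genuine obstacle here: this is the textbook Chernoff argument, and the only place deserving a moment of attention is the optimization step, where one must check that the minimizing $\lambda^\star=t/s$ lies in the admissible range $\lambda>0$ (guaranteed by $t>0$) and dispose of the degenerate case $s=0$ as above.
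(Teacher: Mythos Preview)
Your argument is correct and is the standard Chernoff--Markov proof of the sub-Gaussian Hoeffding bound. Note, however, that the paper does not supply its own proof of this theorem: it is stated in Appendix~A.1 as a cited classical result from \citet{hoeffding1963probability}, so there is nothing to compare against beyond confirming that your derivation is sound.
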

	
	\begin{theorem}\label{TailBoundForSubExp}
		Suppose that $X$ is a sub-exponential random variable with mean $\mu$ and sub-exponential parameters $(\beta,\alpha)$. Then for all $t>0$,
		$$
		\mbb{P}\left(|X-\mu|>t\right)\leq\left\{\begin{array}{ll}
			2\exp\left\{-\frac{t^2}{2\beta^2}\right\}, & \mrm{if }0<t\leq\frac{\beta^2}{\alpha}\\
			2\exp\left\{-\frac{t}{2\alpha}\right\}, & \mrm{if }t>\frac{\beta^2}{\alpha}
		\end{array}
		\right..
		$$
	\end{theorem}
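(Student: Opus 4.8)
The plan is to use the standard Chernoff bounding technique: apply the exponential form of Markov's inequality together with the defining sub-exponential moment generating function bound, and then optimize the free parameter $\lambda$ over the admissible range $[0,1/\alpha)$.

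First I would handle the upper tail $\mbb{P}(X-\mu>t)$. For any $\lambda\in[0,1/\alpha)$, applying Markov's inequality to $e^{\lambda(X-\mu)}$ gives
$$
\mbb{P}(X-\mu>t)\leq e^{-\lambda t}\,\mbb{E}\left[e^{\lambda(X-\mu)}\right]\leq\exp\left\{-\lambda t+\tfrac{1}{2}\lambda^2\beta^2\right\},
$$
where the second inequality invokes the sub-exponential definition. The exponent $\psi(\lambda)=-\lambda t+\tfrac12\lambda^2\beta^2$ is a convex quadratic whose unconstrained minimizer is $\lambda^\ast=t/\beta^2$.

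Then I would split into two cases according to whether $\lambda^\ast$ lies inside the feasible interval. In the regime $0<t\leq\beta^2/\alpha$ we have $\lambda^\ast=t/\beta^2\leq1/\alpha$, so $\lambda^\ast$ is admissible; substituting it yields $\psi(\lambda^\ast)=-t^2/(2\beta^2)$, which is the first branch. In the regime $t>\beta^2/\alpha$ the minimizer falls outside $[0,1/\alpha)$, so by convexity $\psi$ is strictly decreasing on the whole interval and the best admissible choice is obtained by letting $\lambda\to(1/\alpha)^-$; since the Chernoff bound holds for each such $\lambda$ and the left-hand probability is independent of $\lambda$, the limiting exponent $\psi(1/\alpha)=-t/\alpha+\beta^2/(2\alpha^2)$ is itself a valid bound. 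A short algebraic check shows $-t/\alpha+\beta^2/(2\alpha^2)\leq-t/(2\alpha)$ precisely when $t\geq\beta^2/\alpha$, which is exactly the regime at hand, producing the second branch (and the two branches agree at the threshold $t=\beta^2/\alpha$).

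Finally I would treat the lower tail. Because the defining inequality holds for \emph{all} $|\lambda|<1/\alpha$, the variable $-X$ is itself sub-exponential with the same parameters $(\beta,\alpha)$, so the identical argument bounds $\mbb{P}(X-\mu<-t)$ by the same expression. A union bound over the two one-sided events then supplies the factor of $2$ and completes the proof. The only delicate point is the second case: one must verify that the endpoint substitution $\lambda=1/\alpha$ yields exactly the stated exponent $t/(2\alpha)$, and this is precisely what determines the threshold $\beta^2/\alpha$ separating the two branches.
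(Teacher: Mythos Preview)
The paper does not actually supply a proof of this theorem: it is stated in Appendix~A.1 as a background concentration result, with a pointer to standard references (\citet{buldygin2000metric}, \citet{boucheron2004concentration}, \citet{ledoux2005concentration}) but no argument of its own. So there is nothing in the paper to compare against.

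That said, your proposal is correct and is exactly the standard Chernoff argument one finds in those references. The one-sided bound via Markov plus the MGF control, optimization of $\psi(\lambda)=-\lambda t+\tfrac12\lambda^2\beta^2$ over $[0,1/\alpha)$, the case split depending on whether the unconstrained minimizer $t/\beta^2$ falls inside the admissible interval, the limiting argument at the endpoint $\lambda\to(1/\alpha)^-$, the algebraic check that $-t/\alpha+\beta^2/(2\alpha^2)\leq-t/(2\alpha)$ precisely when $t\geq\beta^2/\alpha$, and the symmetric treatment of $-X$ followed by a union bound are all sound. The only minor point is that at the threshold $t=\beta^2/\alpha$ your Case~1 formally takes $\lambda^\ast=1/\alpha\notin[0,1/\alpha)$, but you have already noted that the two branches agree there and that the endpoint value is a valid bound by the same limiting reasoning used in Case~2, so there is no gap.
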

	
	\newpage
	\subsection{Results on Matrix Analysis}
	Proposition \ref{GentleProp} shows that a inverse map of a matrix is a continuous map, which will be frequently used in later parts when we approximate the average prediction errors.
	\begin{prop}[\citet{gentle2007matrix}]\label{GentleProp}
		Let $\mbf{\Psi}$ be an arbitrary invertible matrix. Then the map $f:\mbf{\Psi}\mapsto\mbf{\Psi}^{-1}$ is continuous.
	\end{prop}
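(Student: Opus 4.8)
The plan is to use the explicit Cramer formula for the inverse and observe that it exhibits $\mbf{\Psi}^{-1}$ as a rational function of the entries of $\mbf{\Psi}$ whose denominator does not vanish on the domain of $f$. First I would recall that for an invertible $n\times n$ matrix $\mbf{\Psi}$,
\[
\mbf{\Psi}^{-1}=\frac{1}{\det\mbf{\Psi}}\,\mrm{adj}(\mbf{\Psi}),
\]
where $\mrm{adj}(\mbf{\Psi})$ is the transpose of the cofactor matrix. Each entry of $\mrm{adj}(\mbf{\Psi})$ is, up to a sign, an $(n-1)\times(n-1)$ minor of $\mbf{\Psi}$, hence a polynomial in the entries of $\mbf{\Psi}$; similarly $\det\mbf{\Psi}$ is a polynomial in those entries. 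Polynomials are continuous, and the domain of $f$ is precisely the open set of matrices with $\det\mbf{\Psi}\neq0$, so on that set $f$ is a quotient of continuous matrix- and scalar-valued maps with nonvanishing denominator, and therefore continuous.

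Alternatively --- and in a form that is quantitatively convenient for the prediction-error approximations used later in the paper --- I would argue by perturbation. Fix an invertible $\mbf{\Psi}$, let $\|\cdot\|$ be any submultiplicative matrix norm, and take $\mbf{\Delta}$ with $\|\mbf{\Delta}\|<\|\mbf{\Psi}^{-1}\|^{-1}$. Writing $\mbf{\Psi}+\mbf{\Delta}=\mbf{\Psi}\left(\mbf{I}+\mbf{\Psi}^{-1}\mbf{\Delta}\right)$ and using $\|\mbf{\Psi}^{-1}\mbf{\Delta}\|<1$, the Neumann series shows $\mbf{I}+\mbf{\Psi}^{-1}\mbf{\Delta}$ is invertible with $\left\|\left(\mbf{I}+\mbf{\Psi}^{-1}\mbf{\Delta}\right)^{-1}\right\|\leq\left(1-\|\mbf{\Psi}^{-1}\|\,\|\mbf{\Delta}\|\right)^{-1}$, hence $\mbf{\Psi}+\mbf{\Delta}$ is invertible as well. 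The resolvent identity $(\mbf{\Psi}+\mbf{\Delta})^{-1}-\mbf{\Psi}^{-1}=-(\mbf{\Psi}+\mbf{\Delta})^{-1}\mbf{\Delta}\,\mbf{\Psi}^{-1}$ then gives
\[
\left\|(\mbf{\Psi}+\mbf{\Delta})^{-1}-\mbf{\Psi}^{-1}\right\|\leq\frac{\|\mbf{\Psi}^{-1}\|^{2}\,\|\mbf{\Delta}\|}{1-\|\mbf{\Psi}^{-1}\|\,\|\mbf{\Delta}\|},
\]
whose right-hand side tends to $0$ as $\|\mbf{\Delta}\|\to0$; this is continuity (indeed local Lipschitz continuity) at $\mbf{\Psi}$, and since $\mbf{\Psi}$ was arbitrary the claim follows.

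There is no genuine obstacle here; the single point deserving care is that the domain of $f$ is the open set of invertible matrices rather than all of $\mcal{M}_n(\mbb{R})$, so one must check that a sufficiently small perturbation of an invertible matrix remains invertible. Both routes above dispatch this automatically --- via continuity of the determinant in the first, via convergence of the Neumann series in the second. I would present the perturbation version explicitly, since the displayed bound is exactly the estimate invoked when transferring the convergence $\frac{1}{m}\mbf{U}\mbf{U}^T\to\mbf{\Sigma}$ to convergence of the associated inverses in Lemma \ref{PredErrApprox} and Propositions \ref{propKNN1}--\ref{propKNN2}.
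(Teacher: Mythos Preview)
Your proposal is correct, and your first argument via the Cramer/adjugate formula is essentially identical to the paper's own proof. The Neumann-series perturbation bound you add as an alternative is not in the paper but is a sound and useful quantitative refinement.
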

	
	\begin{proof}
		Since $\mbf{\Psi}^{-1}=|\mbf{\Psi}|^{-1}\mrm{Adj}(\mbf{\Psi})$, where $\mrm{Adj}(\mbf{\Psi})$ is the adjugate matrix of $\mbf{\Psi}$, i.e., $\mrm{Adj}(\mbf{\Psi})=\mbf{C}^T$ and $\mbf{C}$ is the cofactor matrix of $\mbf{\Psi}$ with $\mbf{C}_{ij}=(-1)^{i+j}\mbf{M}_{ij}$ and $\mbf{M}_{ij}$ is the $(i,j)$ cofactor of $\mbf{\Psi}$. Based on the definition of determinant, it is easy to see that the map $g:\mbf{\Psi}\mapsto|\mbf{\Psi}|$ is continuous and the map $h:\mbf{\Psi}\mapsto\mrm{Adj}(\mbf{\Psi})$ is continuous as well. Therefore, the map $f:\mbf{\Psi}\mapsto\mbf{\Psi}^{-1}$ is continuous.
	\end{proof}
	
	Another result that will be used later is that any two matrix norms are equivalent in the sense that for any given pair of matrix norms $\|\cdot\|_s$ and $\|\cdot\|_t$, there is a finite positive constant $C_{st}$ such that
	$$
	\|\mbf{A}\|_s\leq C_{st}\|\mbf{A}\|_t,\quad\forall\mbf{A}\in\mcal{M}_n,
	$$
	where $\mcal{M}_n$ is the collection of all $n\times n$ matrices.

	\newpage
	\section{Technical Proofs}\label{Sec: proof}
	\subsection{Proof of Lemma \ref{funcProdApproxLm}}
	\begin{proof}
		Note that
		$$
		\mbb{E}\left[\frac{\mbf{w}_i^T\mbf{w}_j}{m}\right]=\frac{1}{m}\sum_{k=1}^m\mbb{E}\left[\mbf{w}_{ik}\mbf{w}_{jk}\right]=\sigma_{ij}.
		$$
		Now we consider the Taylor expansion of $\mbf{K}_{ij}(\mbf{U})$ around $\sigma_{ij}$:
		$$
		\mbf{K}_{ij}(\mbf{U})=f(\sigma_{ij})+f'(\sigma_{ij})\left(\frac{\mbf{w}_i^T\mbf{w}_j}{m}-\sigma_{ij}\right)+\frac{1}{2}f''(\eta_{ij})\left(\frac{\mbf{w}_i^T\mbf{w}_j}{m}-\sigma_{ij}\right)^2,
		$$
		where $\eta_{ij}$ is between $\sigma_{ij}$ and $\frac{\mbf{w}_i^T\mbf{w}_j}{m}$. Then the truncation error can be evaluated as follows. For all $\delta>0$,
		\begin{align*}
			\mbb{P}\left(\left|\mbf{K}_{ij}(\mbf{U})-\hat{\mbf{K}}_{ij}(\mbf{U})\right|>\delta\right) & =\mbb{P}\left(\frac{1}{2}|f''(\eta_{ij})|\left(\frac{\mbf{w}_i^T\mbf{w}_j}{m}-\sigma_{ij}\right)^2>\delta\right)\\
			& \leq\mbb{P}\left(\frac{1}{2}M\left(\frac{\mbf{w}_i^T\mbf{w}_j}{m}-\sigma_{ij}\right)^2>\delta\right)\\
			& =\mbb{P}\left(\left|\frac{\mbf{w}_i^T\mbf{w}_j}{m}-\sigma_{ij}\right|>\sqrt{\frac{2\delta}{M}}\right).
		\end{align*}
		So it suffices to evaluate the tail probability $\mbb{P}\left(\left|\frac{\mbf{w}_i^T\mbf{w}_j}{m}-\sigma_{ij}\right|>\sqrt{\frac{2\delta}{M}}\right)$. Note that 
		$$
		\mbf{w}_j|\mbf{w}_i\sim\mcal{N}_m\left(\frac{\sigma_{ij}}{\sigma_{ii}}\mbf{w}_i,\left(\sigma_{jj}-\frac{\sigma_{ij}^2}{\sigma_{ii}}\right)\mbf{I}_m\right),
		$$
		we get
		$$
		\mbf{w}_i^T\mbf{w}_j|\mbf{w}_i\sim\mcal{N}\left(\frac{\sigma_{ij}}{\sigma_{ii}}\mbf{w}_i^T\mbf{w}_i,\left(\sigma_{jj}-\frac{\sigma_{ij}^2}{\sigma_{ii}}\right)\mbf{w}_i^T\mbf{w_i}\right)
		$$
		Therefore, given $\mbf{w}_i$, the random variable $\mbf{w}_i^T\mbf{w}_j$ is sub-Gaussian with sub-Gaussian parameter $\sigma_{ii}^{-1}s_{i,j}\mbf{w}_i^T\mbf{w}_i$, where $s_{ij}=\sigma_{ii}\sigma_{jj}-\sigma_{ij}^2$. Since
		\begin{align*}
			\mbb{P}\left(\left|\frac{\mbf{w}_i^T\mbf{w}_j}{m}-\sigma_{ij}\right|>\sqrt{\frac{2\delta}{M}}\right) & \leq\mbb{P}\left(\left|\frac{\mbf{w}_i^T\mbf{w}_j}{m}-\frac{\sigma_{ij}}{\sigma_{ii}}\frac{\mbf{w}_i^T\mbf{w}_i}{m}\right|>\frac{1}{2}\sqrt{\frac{2\delta}{M}}\right)+\\
			& \hspace{3cm}\mbb{P}\left(\left|\frac{\sigma_{ij}}{\sigma_{ii}}\frac{\mbf{w}_i^T\mbf{w}_i}{m}-\sigma_{ij}\right|>\frac{1}{2}\sqrt{\frac{2\delta}{M}}\right)\\
			& :=(I)+(II),
		\end{align*}
		it suffices to provide bounds for (\textit{I}) and (\textit{II}).
		
		For $(II)$, note that $\mbf{w}_i\sim\mcal{N}_m(\mbf{0},\sigma_{ii}\mbf{I}_m)$, we have $\frac{1}{\sigma_{ii}}\mbf{w}_i^T\mbf{w}_i\sim\chi_m^2$, which implies that
		\begin{align*}
			\mbb{E}\left[e^{\lambda\left(\frac{\mbf{v}_i^T\mbf{v}_i}{\sigma_{ii}}-m\right)}\right] & =e^{-\lambda m}\mbb{E}\left[e^{\lambda\frac{\mbf{v}_i^T\mbf{v}_i}{\sigma_{ii}}}\right]=e^{-\lambda m}(1-2\lambda)^{-\frac{m}{2}},\quad\mrm{for }\lambda<\frac{1}{2}\\
			& =\left(\frac{e^{-\lambda}}{\sqrt{1-2\lambda}}\right)^m\\
			& \leq e^{2m\lambda^2},\quad\mrm{for all }|\lambda|<\frac{1}{4}\\
			& =e^{\frac{4m\lambda^2}{2}},\quad\mrm{for all }|\lambda|<\frac{1}{4},
		\end{align*}
		i.e., $\frac{1}{\sigma_{ii}}\mbf{w}_i^T\mbf{w}_i$ is sub-exponential with parameters $(2\sqrt{m}, 4)$. Hence, for $\sigma_{ij}\neq0$, by Theorem \ref{TailBoundForSubExp} in Appendix A.1, we have
		\begin{align*}
			(II) & =\mbb{P}\left(\left|\frac{1}{\sigma_{ii}}\mbf{w}_i^T\mbf{w}_i-m\right|>\frac{2m}{|\sigma_{ij}|}\sqrt{\frac{2\delta}{M}}\right)\\
			& \leq2\exp\left\{-\left(\frac{\delta m}{\sigma_{ij}^2M}\wedge\frac{m}{4|\sigma_{ij}|}\sqrt{\frac{2\delta}{M}}\right)\right\}\numberthis\label{boundOnII}
		\end{align*}
		If $\sigma_{ij}=0$, then $(II)=0$.
		
		For $(I)$, by Hoeffding inequality, we have
		\begin{align*}
			(I) & =\mbb{E}_{\mbf{w}_i}\left[\mbb{P}\left(\left.\left|\frac{\mbf{w}_i^T\mbf{w}_j}{m}-\frac{\sigma_{ij}}{\sigma_{ii}}\frac{\mbf{w}_i^T\mbf{w}_i}{m}\right|>\frac{1}{2}\sqrt{\frac{2\delta}{M}}\right|\mbf{w}_i\right)\right]\\
			& =\mbb{E}_{\mbf{w}_i}\left[\mbb{P}\left(\left.\left|\mbf{w}_i^T\mbf{w}_j-\frac{\sigma_{ij}}{\sigma_{ii}}\mbf{w}_i^T\mbf{w}_i\right|>\frac{m}{2}\sqrt{\frac{2\delta}{M}}\right|\mbf{w}_i\right)\right]\\
			& \leq\mbb{E}_{\mbf{w}_i}\left[2\exp\left\{-\frac{\sigma_{ii}m^2\delta}{4Ms_{ij}\mbf{w}_i^T\mbf{w}_i}\right\}\right]\numberthis\label{IHoeffding}
		\end{align*}
		From Theorem A in \citet{inglot2010inequalities} stated that for a random variable $\chi\sim\chi_m^2$, the $100(1-\alpha)$th percentile is upper bounded by $m+\log(1/\alpha)+2\sqrt{m\log(1/\alpha)}$, which is of the order $\mcal{O}(m)$ as $m\to\infty$. Now, since $\sigma_{ii}^{-1}\mbf{w}_i^T\mbf{w}_i\sim\chi_m^2$, we get for any $\alpha\in(0,1)$,
		$$
		\mbb{P}\left(\sigma_{ii}^{-1}\mbf{w}_i^T\mbf{w}_i\geq m+2\log\frac{1}{\alpha}+2\sqrt{2m\log\frac{1}{\alpha}}\right)=\alpha.
		$$
		Let $q(\alpha,m)=m+2\log(1/\alpha)+2\sqrt{2m\log(1/\alpha)}$. Since the function $\exp\{-a/x\}$ is increasing in $x$ for $a>0$, we can further bound (\ref{IHoeffding}) as follows:
		\begin{align*}
			(I) & \leq\mbb{E}_{\mbf{w}_i}\left[2\exp\left\{-\frac{m^2\delta}{4Ms_{ij}\sigma_{ii}^{-1}\mbf{w}_i^T\mbf{w}_i}\right\}\mbb{I}_{\{\sigma_{ii}^{-1}\mbf{w}_i^T\mbf{w}_i\leq q(\alpha,m)\}}\right]+\\
			& \hspace{4cm}\mbb{E}_{\mbf{w}_i}\left[2\exp\left\{-\frac{m^2\delta}{4Ms_{ij}\sigma_{ii}^{-1}\mbf{w}_i^T\mbf{w}_i}\right\}\mbb{I}_{\{\sigma_{ii}^{-1}\mbf{w}_i^T\mbf{w}_i\geq q(\alpha,m)\}}\right]\\
			& \leq2\exp\left\{-\frac{m^2\delta}{4Ms_{ij}q(\alpha,m)}\right\}+2\mbb{P}\left(\sigma_{ii}^{-1}\mbf{w}_i^T\mbf{w}_i\geq q(\alpha,m)\right)\\
			& \leq2\exp\left\{-\frac{m^2\delta}{4Ms_{ij}q(\alpha,m)}\right\}+2\alpha.
		\end{align*}
		By choosing $\alpha=\exp\{-m\}$, we get $q(\alpha,m)=m+2m+2\sqrt{m^2}=5m$ so that
		\begin{align*}
			(I) & \leq2\exp\left\{-\frac{m^2\delta}{20Mms_{ij}}\right\}+2\exp\{-m\}\\
			& \leq2\exp\left\{-m\left(1\wedge\frac{\delta}{20Ms_{ij}}\right)\right\}.\numberthis\label{boundOnI}
		\end{align*}
		Combining (\ref{boundOnI}) and (\ref{boundOnII}), we obtain for all $\delta>0$,
		\begin{align*}
			\mbb{P}\left(\left|\mbf{K}_{ij}(\mbf{U})-\hat{\mbf{K}}_{ij}(\mbf{U})\right|>\delta\right) & \leq (I)+(II)\\
			& \leq4\exp\left\{-m\left(1\wedge\frac{\delta}{20Ms_{ij}}\wedge\frac{\delta}{M\sigma_{ij}^2}\wedge\frac{
				1}{4|\sigma_{ij}|}\sqrt{\frac{2\delta}{M}}\right)\right\}.
		\end{align*}
	\end{proof}
	
	\begin{remark}\label{funcProdApproxRmk}
		The condition (\ref{asBoundedAssumption}) can be weakened as follows:
		$$
		f''\left(\lambda\sigma_{ij}+(1-\lambda)\frac{\mbf{w}_i^T\mbf{w}_j}{m}\right)=\mcal{O}_p(1)
		$$
		for all $\lambda\in[0,1]$. In such case, the evaluation of truncation error can be modified as follows: For all $\delta>0$, there exists $M_\delta>0$ such that
		$$
		\mbb{P}\left(\left|f''(\eta_{ij})\right|>M_\delta\right)<\frac{\delta}{2},
		$$
		where $\eta_{ij}=\lambda\sigma_{ij}+(1-\lambda)\frac{\mbf{w}_i^T\mbf{w}_j}{m}$ for some $\lambda\in[0,1]$ and then
		\begin{align*}
			\mbb{P}\left(\left|\mbf{K}_{ij}(\mbf{U})-\hat{\mbf{K}}_{ij}(\mbf{U})\right|>\delta\right) & =\mbb{P}\left(\frac{1}{2}f''(\eta_{ij})\left(\frac{\mbf{w}_i^T\mbf{w}_j}{m}-\sigma_{ij}\right)^2>\delta\right)\\
			& \leq\mbb{P}\left(\left\{\frac{1}{2}f''(\eta_{ij})\left(\frac{\mbf{w}_i^T\mbf{w}_j}{m}-\sigma_{ij}\right)^2>\delta\right\}\cap\left\{|f''(\eta_{ij})|\leq M_\delta\right\}\right)+\\
			& \hspace{5cm}\mbb{P}\left(|f''(\eta_{ij})|>M_\delta\right)\\
			& \leq\mbb{P}\left(\left|\frac{\mbf{w}_i^T\mbf{w}_j}{m}-\sigma_{ij}\right|>\sqrt{\frac{2\delta}{M_\delta}}\right)+\frac{\delta}{2}\\
			& \leq4\exp\left\{-m\left(1\wedge\frac{\delta}{20M_\delta s_{ij}}\wedge\frac{\delta}{M_\delta\sigma_{ij}^2}\wedge\frac{1}{4|\sigma_{ij}|}\sqrt{\frac{2\delta}{M_\delta}}\right)\right\}+\frac{\delta}{2}
		\end{align*}
		Now, we can choose $m>\left(1\wedge\frac{\delta}{20M_\delta s_{ij}}\wedge\frac{\delta}{M_\delta\sigma_{ij}^2}\wedge\frac{1}{4|\sigma_{ij}|}\sqrt{\frac{2\delta}{M_\delta}}\right)^{-1}\log\frac{8}{\delta}$ so that
		$$
		\mbb{P}\left(\left|\mbf{K}_{ij}(\mbf{U})-\hat{\mbf{K}}_{ij}(\mbf{U})\right|>\delta\right)<\frac{\delta}{2}+\frac{\delta}{2}=\delta
		$$
		and hence $\mbf{K}_{ij}(\mbf{U})=\hat{\mbf{K}}_{ij}(\mbf{U})+o_p(1)$.
	\end{remark}

	\subsection{Proof of Lemma \ref{PredErrApprox}}
	\begin{proof}
		\begin{enumerate}[(i)]
			\item When $f(x)=x$, we have $\mbf{K}(\mbf{U})=\frac{1}{m}\mbf{UU}^T$ and since the hidden random vectors $\mbf{u}_1,\ldots,\mbf{u}_m$ are i.i.d, we can know that
			$$
			\mbf{u}_1\mbf{u}_1^T,\ldots,\mbf{u}_m\mbf{u}_m^T\sim\mrm{ i.i.d. }\mcal{W}_n\left(1,\sum_{l=1}^L\xi_l\mbf{K}_l(\mbf{X})\right),
			$$
			where $\mcal{W}_n\left(1,\sum_{l=1}^L\xi_l\mbf{K}_l(\mbf{X})\right)$ stands for a Wishart distribution with degrees of freedom 1 and covariance matrix $\sum_{l=1}^L\xi_l\mbf{K}_l(\mbf{X})$. Therefore, the Strong Law of Large Numbers implies that as $m\to\infty$,
			$$
			\mbf{K}(\mbf{U})=\frac{1}{m}\mbf{U}\mbf{U}^T=\frac{1}{m}\sum_{i=1}^m\mbf{u}_i\mbf{u}_i^T\to\mbb{E}\left[\mbf{u}_1\mbf{u}_1^T\right]=\sum_{l=1}^L\xi_l\mbf{K}_l(\mbf{X}),\quad\mrm{a.s.}.
			$$
			Since the the map $\psi:\mbf{A}\mapsto\mbf{A}^{-1}$ for non-singular matrix $\mbf{A}$ is continuous, then we can obtain the following result by using the Continuous Mapping Theorem.
			$$
			\left(\tilde{\tau}\mbf{K}(\mbf{U})+\mbf{I}_n\right)^{-1}\to\left(\sum_{l=1}^L\tilde{\tau}\xi\mbf{K}_l(\mbf{X})+\mbf{I}_n\right)^{-1},\quad\mrm{a.s., as }m\to\infty
			$$
			Let $\tilde{\mbf{A}}=(\tilde{\tau}\mbf{K}(\mbf{U})+\mbf{I}_n)^{-1}$, we have
			\begin{align*}
				\max_{1\leq i,j\leq n}|\tilde{\mbf{A}}_{ij}|\leq\|\tilde{\mbf{A}}\|_\infty & \lesssim\|\tilde{\mbf{A}}\|_{\mrm{op}}\\
				& =\lambda_{\max}\left(\left(\tilde{\tau}\mbf{K}(\mbf{U})+\mbf{I}_n\right)^{-1}\right)\\
				& =\frac{1}{\tilde{\tau}\lambda_{\min}(\mbf{K}(\mbf{U}))+1}\leq1<\infty.
			\end{align*}
			Therefore, by Bounded Convergence Theorem,
			$$
			\mbf{A}:=\mbb{E}\left[\left(\tilde{\tau}\mbf{K}(\mbf{U})+\mbf{I}_n\right)^{-1}\right]\to\left(\sum_{l=1}^L\tilde{\tau}\xi\mbf{K}_l(\mbf{X})+\mbf{I}_n\right)^{-1},\quad\mrm{a.s. as }m\to\infty.
			$$
			Asymptotically, we get as $m\to\infty$.
			$$
			R\simeq\mbf{y}^T\left(\sum_{l=1}^L\tilde{\tau}\xi\mbf{K}_l(\mbf{X})+\mbf{I}_n\right)^{-2}\mbf{y}.
			$$
			
			\item Note that equation (\ref{KmatApprox}) can be further written as
			$$
			\mbf{K}(\mbf{U})=f[\mbf{\Sigma}]+o_P(1),
			$$
			or equivalently, $\mbf{K}(\mbf{U})\xrightarrow{P}f[\mbf{\Sigma}]$ as $m\to\infty$ element-wisely. Similarly, under the assumption of $\|\mbf{K}(\mbf{U})\|_{\mrm{op}}<\infty$ a.s., we have
			$$
			\mbb{E}\left[\tilde{\tau}\mbf{K}(\mbf{U})+\mbf{I}_n\right]\to\tilde{\tau}f[\mbf{\Sigma}]+\mbf{I}_n.
			$$
			Hence by Bounded Convergence Theorem and Continuous Mapping Theorem, we have
			$$
			\mbf{A}=\mbb{E}\left[\left(\tilde{\tau}\mbf{K}(\mbf{U})+\mbf{I}_n\right)^{-1}\right]\to\left(\tilde{\tau}f[\mbf{\Sigma}]+\mbf{I}_n\right)^{-1},\quad\mrm{as }m\to\infty,
			$$
			which shows that as $m\to\infty$, 
			$$
			R\simeq\mbf{y}^T\left(\tilde{\tau}f\left[\sum_{l=1}^L\xi\mbf{K}_l(\mbf{X})\right]+\mbf{I}_n\right)^{-2}\mbf{y}.
			$$
		\end{enumerate}
	\end{proof}

	\subsection{Proof of Proposition \ref{propKNN1}}
	\begin{proof}
		The result follows by noting that
		\begin{align*}
			APELMM & =\mbb{E}\left[(\mbf{y}-\mbb{E}[\mbf{a}|\mbf{y}])^T(\mbf{y}-\mbb{E}[\mbf{a}|\mbf{y}])\right]\\
			& =\mbb{E}\left[\mbf{y}^T\left(\mbf{I}_n-\tilde{\sigma}_R^2\mbf{\Sigma}(\tilde{\sigma}_R^2\mbf{\Sigma}+\mbf{I}_n)^{-1}\right)^T\left(\mbf{I}_n-\tilde{\sigma}_R^2\mbf{\Sigma}(\tilde{\sigma}_R^2\mbf{\Sigma}+\mbf{I}_n)^{-1}\right)\mbf{y}\right]\\
			& =\mbb{E}\left[\mbf{y}^T\left((\tilde{\sigma}_R^2\mbf{\Sigma}+\mbf{I}_n)^{-1}\right)^2\mbf{y}\right]\\
			& =\mrm{tr}\left[\left((\tilde{\sigma}_R^2\mbf{\Sigma}+\mbf{I}_n)^{-1}\right)^2\left(\sigma_R^2\mbf{\Sigma}+\phi\mbf{I}_n\right)\right]\\
			& =\phi\mrm{tr}\left[(\tilde{\sigma}_R^2\mbf{\Sigma}+\mbf{I}_n)^{-1}\right]\\
			& =\phi\sum_{i=1}^n\left(\tilde{\sigma}_R^2\lambda_i(\mbf{\Sigma})+1\right)^{-1}.
		\end{align*}
	\end{proof}

	\newpage
	\section{More Simulation Results}
	Figure \ref{nonlinear2} demonstrates the performance of LMM and KNN in terms of prediction error when the inverse logistic function and the polynomial function of order 2 are used.
	
	\begin{figure}[htbp]
		\centering
		\includegraphics[width=0.45\textwidth]{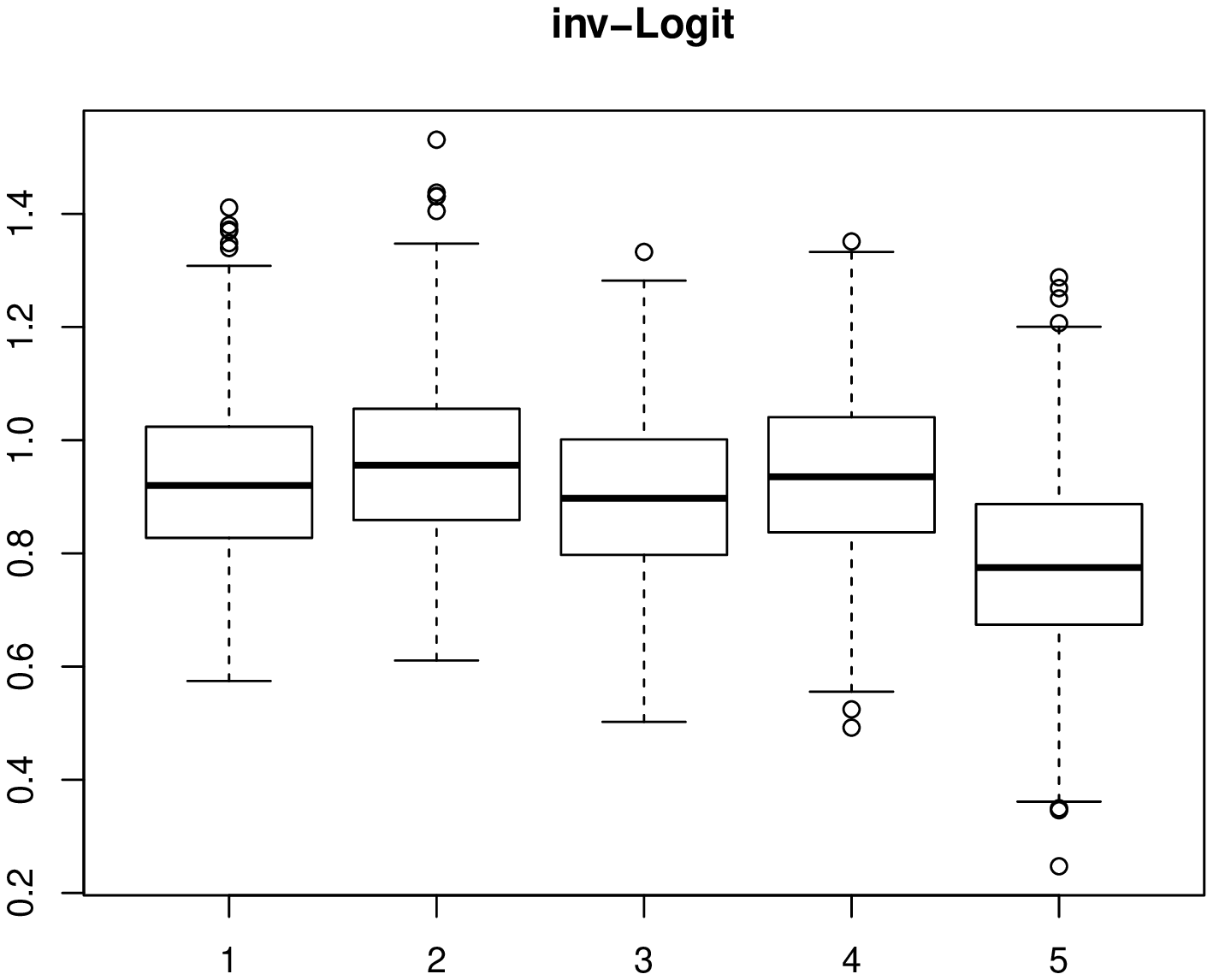}
		\includegraphics[width=0.45\textwidth]{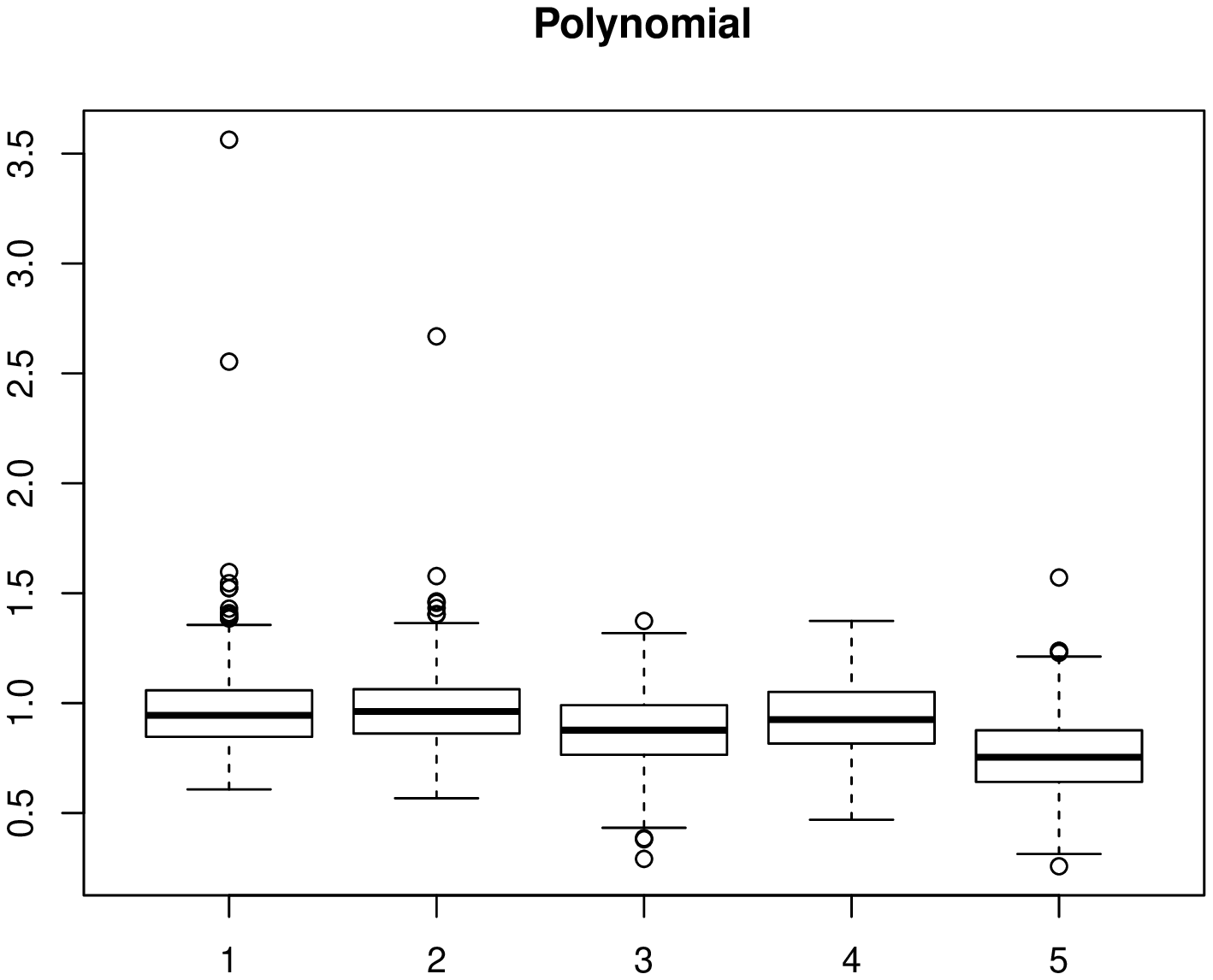}
		\caption{The boxplots for linear mixed models (LMM) and kernel neural network (KNN) in terms of prediction errors. The left panel shows the results when an inverse logistic function is used and the right panel shows the results when a polynomial function of order 2 is used. In the horizontal axis, ``1" corresponds to the LMM; ``2" corresponds to the KNN with product input kernel and product output kernel; ``3" corresponds to the KNN with product input and polynomial output; ``4" corresponds to the KNN with polynomial input and product output and ``5" corresponds to the polynomial input and polynomial output.}\label{nonlinear2}
	\end{figure}
\end{appendices}

%
%
%
%
%

\bibliography{KNN}
\end{document}